\newtheorem{lemma}{\textbf{Lemma}}
\newtheorem{assumption}{\textbf{Assumption}}
\newtheorem{remark}{\textbf{Remark}}
\newtheorem{definition}{\textbf{Definition}}
\newtheorem{theorem}{\textbf{Theorem}}
\begin{document}
\title{Stochastic Transceiver Optimization in Multi-Tags Symbiotic Radio Systems}

\author{\IEEEauthorblockN{Xihan Chen, \IEEEmembership{Student Member,~IEEE}, Hei Victor Cheng, \IEEEmembership{Member,~IEEE}, Kaiming Shen, \IEEEmembership{Student
Member,~IEEE},\\
An Liu, \IEEEmembership{Senior Member,~IEEE}, and Min-Jian Zhao, \IEEEmembership{Member,~IEEE}}

\thanks{
X. Chen, A. Liu and M. Zhao are with the Department of Information Science and Electronic Engineering, Zhejiang University, Hangzhou, China (e-mail: chenxihan@zju.edu.cn;
anliu@zju.edu.cn;  mjzhao@zju.edu.cn).

H.V. Cheng, and
K. Shen is with the Edward S. Rogers Sr. Department of Electrical and Computer Engineering, University of Toronto, Toronto, ON M5S 3G4 (e-mail: hei.cheng@utoronto.ca,
kshen@ece.utoronto.ca).
}
}

\maketitle

\begin{abstract}
Symbiotic radio (SR) is emerging as a spectrum- and energy-efficient communication paradigm for future passive Internet-of-things (IoT), where some single-antenna
backscatter devices, referred to as Tags, are parasitic in an active primary transmission. The primary transceiver is designed to assist both direct-link (DL) and
backscatter-link (BL) communication. In multi-tags SR systems, the transceiver designs become much more complicated due to the presence of  DL and inter-Tag interference,
which further poses new challenges to the availability and reliability of DL and BL transmission. To overcome these challenges, we formulate the stochastic optimization of
transceiver design as the general network utility maximization problem (GUMP). The resultant problem is a stochastic multiple-ratio fractional non-convex problem, and
consequently challenging to solve. By leveraging some fractional programming techniques, we tailor a surrogate function with the specific structure and subsequently develop
a batch stochastic parallel decomposition (BSPD) algorithm, which is shown to converge to stationary solutions of the GNUMP. Simulation results verify the effectiveness of
the proposed algorithm by numerical examples in terms of the achieved system throughput.
\end{abstract}

\begin{IEEEkeywords}
Multi-Tags symbiotic radio, stochastic transceiver optimization, fractional programming, batch stochastic parallel decomposition algorithm.
\end{IEEEkeywords}

\IEEEpeerreviewmaketitle

\section{Introduction} \label{Intro}
The Internet-of-Things (IoT) \cite{IoT_survey} is envisioned to be a key application scenario in the  next-generation wireless networks, which is conceived for connecting a
plethora of devices together. Specifically, the IoT greatly enhances quality of our daily life and enables various emerging services such as health-care, smart home, and many
others \cite{IoT_benefit}. Despite these potential merits, there are two major challenges in realizing such huge gain in practical IoT systems \cite{IoT_Challenge}. First, the
battery life of IoT devices is usually limited due to their restricted size, which requires more cost-effective maintenance without frequent battery replacement/rechargement
\cite{QinTao}. On the other hand, the number of IoT devices is expected to grow in large numbers, while the radio spectrum resource is limited and will be insufficient to
cater the needs of all IoT devices \cite{FDD_CBSC}. Therefore, it is highly imperative to design spectral- and energy-efficient radio technologies to support practical IoT
systems.

To alleviate the performance bottleneck, one promising solution is to leverage the ambient backscatter communications (AmBC) \cite{binary_mod}, where the ambient backscatter
devices, referred to as Tags \cite{Tag}, can transmit over the surrounding ambient radio frequency (RF) signals  without requiring active RF components. Moreover, the AmBC
shares the same spectrum with the legacy system, and does not occupy any additional spectrum, which thereby significantly improves the spectral utilization efficiency
\cite{AMBC_survey}. However, due to the nature of spectrum sharing, the direct-link (DL) transmission imposes severe interference on the backscatter-link (BL) transmission
\cite{DLI}, which further limits the overall system performance. To overcome this issue, a growing body of literatures have recently proposed various methods. In
\cite{FS_wifi}, the frequency shifting (FS) technique was first integrated with AmBC to suppress the  DL interference (DLI). This framework was further developed in
\cite{FS_inter} to enable efficient inter-technology backscatter transmission. Nevertheless, the implementations of above FS techniques require the additional spectrum
resources, which is not suitable for IoT deployments due to the spectrum scarcity problem. Bypassing the above problems, \cite{MA_BSC} combines AmBC and
multiple-input-multiple-output (MIMO) to further improve the spectral efficiency of IoT networks as well as effectively mitigate the DLI. Moreover, a cooperative receiver with
multiple antennas was proposed in \cite{Cooperative_BSC} to recover signals from both the DL and BL transmission. In \cite{OFDM}, a novel joint design for backscattering
waveform and receiver detector was devised exploiting the repeating structure of the ambient orthogonal frequency division multiplexing (OFDM) signals due to the use of
cyclic prefix.

Different from treating the legacy signal as an interference in conventional AmBC systems, a novel technique, called symbiotic radio (SR), was first proposed in
\cite{SymRadio} to assist both the DL and the BL transmission. Specifically, the BL transmission shares not only the RF source and the spectrum resource but also the receiver
with DL transmission in SR systems. Hence, the SR designs have the following advantages: On one hand, the BL transmission can help the DL transmission through providing
additional multipath. On the other hand, the DL transmission offers transmission opportunity to Tags \cite{rate_region}.  Moreover, conventional AmBC harvests energy from ambient signals to support the circuit operation, while in the SR system, we focus on the symbiotic relationship between primary and BL transmissions, and thus the circuit operation can be supported by an internal power source.
 Due to these desirable features, SR systems have
recently drawn considerable interests from the research community. In \cite{SymRadio}, two novel joint transceivers was designed to maximize the weighted sum rate and minimize
the transmit power for the single-Tag SR systems, respectively. Furthermore, the SR system with fading channels was considered in \cite{ResourceA_BSC}, where the transmit
power at the DL transmission and the reflection coefficient at the Tag were jointly optimized to maximize the ergodic weighted sum rate of DL and BL transmission under
long-term/short-term transmit power constraint. To further reduce the transmission latency and improve the spectral efficiency, the authors of \cite{FD_BSC} amalgamated
full-duplex techniques with the SR systems and tackled the self-interference by relying on clever interference cancellation. In addition, the authors of \cite{Noma_BSC}
invoked the non-orthogonal multiple access (NOMA) technology  for the SR systems to enable multiple devices share the allotted spectrum in the most effective manner.

However, it is worth emphasizing that the existing transceiver designs for single-Tag SR systems cannot be directly applied to the multi-Tags SR systems. In  multi-Tags SR
systems, the transceiver designs become much more complicated due to the presence of inter-Tag interference, which further poses new challenges to the availability and
reliability of DL and BL transmission. Moreover, the previous works focus on maximizing rate performance for Tags, which may not be a suitable in the SR regime. The reason is that
Tag cannot perform complicated adaptive modulation coding (AMC) but only simple coding and modulation scheme such as OOK with simple channel coding (e.g., Hamming code) with a fixed data rate, due to the limited hardware capability. As such, it is more desirable to guarantee certain signal-to-interference-plus-noise ratio (SINR) for all Tags so that all Tags can achieve an acceptable quality-of-service (QoS)/ bit error rate (BER).
To the best of our knowledge,  \textbf{although the conventional multi-Tags AmBC system has been studied in the literature, the transceiver design of the multi-Tags SR system based on the SINR utility maximization has not yet been addressed in the literature.}

Motivated by the above concerns, we propose a stochastic transceiver design for the downlink of the multi-Tags SR systems, to alleviate the performance bottleneck caused by
the DLI and inter-Tag interference. In such design, the primary transmitter (PT) first employs the transmit beamforming to support both the DL and BL transmission. Moreover,
the primary receiver (PR) applies the receive beamforming to suppress the DLI and separate out the backscattered signal from multiple Tags. Then, we jointly optimize the
transmit beamforming at the PT as well as the receive beamforming at the PR by maximizing a general utility function of the expected SINR of both the primary and backscatter
transmissions, under practical optimization constraints.  In particular, we also focus on a novel SINR utility function as an example, which is designed based on the barrier method \cite{NP_OPT}
such that by maximizing it, the PR's expected SINR can be maximized and meanwhile, each Tag's expected SINR can be guaranteed to exceed certain threshold. Note that the resulting problem formulation is amenable to network optimization because it avoids complicated stochastic nonconvex QoS constraints, but meanwhile guaranteeing the favorable SINR of all Tags with high accuracy.
The designed multi-Tags SR system has a great potential for applications in future green IoT systems such as wearable sensor networks and smart homes. For example, a smartphone simultaneously recovers information from both a WiFi access point and domestic sensors for smart-home applications.
 However, these are also several technical challenges in the implementation of this architecture:
\begin{itemize}
\item \textbf{Stochastic Multiple-ratio Fractional  Non-convex Optimization}: The joint optimization of the transmit beamforming and the receive beamforming belongs to
    stochastic non-convex optimization, which is difficult to solve. Specifically, the objective function is nonconvex and contains the expectation operators, which cannot
    be computed analytically and accurately. In addition, the resulting optimization problem is also typically a function of multiple-ratio fractional programming (FP),
    where the optimization variables appear in both the numerator and the denominator, leading to an NP-hard problem.

\item \textbf{Lack of Real-time Tags' Symbol Information}: In practice, it is difficult to obtain the real-time Tags' symbol information (TSI) due to the limited coherence
    time and the huge signaling overhead. Therefore, it is more reasonable to consider a design based on TSI statistics, with the reduced feedback signaling.

\item \textbf{Convergence Analysis}: It is very important to establish the convergence of the algorithm. However, this is non-trivial for a stochastic multiple-ratio
    fractional nonconvex optimization problem, due to the following reasons. First, the objective function of the formulated optimization problem is nonconvex. Second, it is difficult  to estimate accurately the corresponding expected value.
\end{itemize}

To address the above challenges, we propose a \emph{batch stochastic parallel decomposition} (BSPD) algorithm to solve such stochastic multiple-ratio fractional  non-convex
problem.
Different from the traditional sample average approximation (SAA) method \cite{SAA_lec}, it does not require to collect a large number of samples of the random system states before solving the resulting problem and process all samples per iteration, which alleviates the performance bottleneck caused by huge memory requirement and computational complexity. In particular,
we tailor a surrogate function by exploiting the special fractional structure of the formulated optimization problem, which replaces the expected values with properly chosen incremental sample estimates of it and linearizes the nonconvex part. Based on the well-designed surrogate function and properly chosen step-size sequence, such incremental estimates are expected to be more and more accurate as the number of iterations increases.
Consequently, we establish convergence of the proposed BSPD algorithm to stationary solutions in a mini-batch fashion. Moreover, it should be emphasized that the convergence and complexity of the proposed BSPD algorithm heavily depend on the choice of surrogate function. The well-designed surrogate function will help to speed up the convergence speed by preserving the structure of the original problem and provide some other potential advantages as elaborated in Remark \ref{advSurrogate}.
Finally, simulation results verify the
advantages of the proposed algorithms over the baselines.

The rest of paper is organized as follows. In Section \ref{sys_mod}, we give the system model for the downlink of a multi-Tags SR system and
formulate the joint optimization of stochastic transceiver as a stochastic non-convex optimization problem.
The proposed BSPD algorithm and the associated convergence proof are presented in Section \ref{EGUO}. The simulation results are given in Section \ref{sec_simulation} to
verify the advantages of the proposed solution, and the conclusion is given in Section \ref{sec_con}.

\emph{Notations}: Scalar, vectors, and matrices are respectively denoted by lower case, boldface lower case, and boldface upper case letters. $\mathbf{I}_m$ represents an
identity matrix with dimension $m\times m$. For a matrix $\mathbf{A}$, $\mathbf{A}^T$, $\mathbf{A}^{\ast}$, $\mathbf{A}^H$, $\mathbf{A}^{-1}$, and
$\mathrm{Tr}(\mathbf{A})$ denote its transpose, conjugate, conjugate transpose, inverse, cholesky decomposition, and trace, respectively. For a vector $\bm{a}$, $\|\bm{a}\|$
represents its Euclidean norm. Further, $\mathbb{E}[\cdot]$ denotes expectation operation, $\triangleq$ denotes definition, and the operator $\mathrm{vec}(\cdot)$ stacks the
elements of a matrix in one long column vector. $\mathbb{C}^{m\times n}$ ($\mathbb{R}^{m\times n}$) denotes the space of $m\times n$ complex (real) matrices.
$\mathcal{CN}(\delta,\sigma^2)$ represents a complex Gaussian distribution with mean $\delta$ and variance $\sigma^2$.
\section{System model and problem formulation}\label{sys_mod}
\subsection{Network Architecture and Channel Model}
Consider the downlink of a multi-Tags SR system consisting of a primary transmitter (PT) equipped with $M $ antennas, $K$ single-antenna passive
Tags, and a primary receiver (PR) equipped with $N$ antennas, as shown in Fig. \ref{fig:system_model}. For such scenario, the BL transmission reuses the spectrum of the DL
transmission.

\begin{figure}[h]
\centering
\includegraphics[width=9cm]{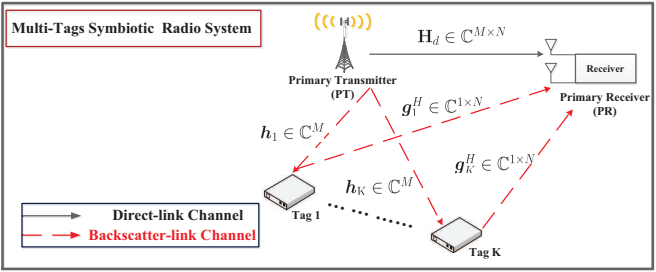}
\caption{An illustration of the downlink of a multi-Tags SR system.}\label{fig:system_model}
\end{figure}

Furthermore, each Tag $k$ modulates its information bits over the ambient RF signals by dynamically adjusting its antenna impedance, without requiring active RF components in
its transmission part, as illustrated in Fig. \ref{fig:passiveTag}. Specifically, Tag $k$ switches its impedance into the backscattered state, and backscatters the incoming
signal. Otherwise, Tag $k$ switches its impedance into the matched state, and dose not backscatter any signal \cite{binary_mod}. Hereinafter, we let $\alpha_k\in[0,1]$ be a
constant reflection coefficient\footnote{The refection coefficient of each Tag depends on the specific hardware implementation, such as the reradiation of closed-circuited
antenna, the chip impedance, and many others \cite{Cooperative_BSC,refCons}.} of Tag $k$. Moreover, we adopts the on-off keying (OOK) modulation scheme for BL transmission due
to the simple circuit design. Each Tag $k$ decides whether or not to transmit signal with probability $\rho_k$ in an i.i.d. manner. In this case, the symbol transmitted
from Tag $k$ with symbol period $T_b$ can be expressed as follows:
$$ b_{k}=\left\{
\begin{aligned}
 1,&~\mathrm{if~Tag}~k~\mathrm{is~in~backscattered~state},\\
 0,&~\mathrm{otherwise},
\end{aligned}
\right.
$$
so that $\mathrm{Pr}(b_k=1)=\rho_k$, and $\mathrm{Pr}(b_k=0)=1-\rho_k$.

Following the the earlier works on the SR scenario \cite{MA_BSC,Cooperative_BSC,ResourceA_BSC}, we also assume that the distance between any Tag and the PR is much shorter
than that between the PT and the PR, due to the limited transmission range of the backscatter communication. Several time synchronization methods have been proposed to
accurately estimate signal propagation delay, see e.g. \cite{SymRadio}, \cite{timeSyn}. By applying the above methods, all the propagation delay can be compensated properly
and the signals are synchronized perfectly.

\begin{figure}[!h]
\centering
\includegraphics[width=7cm]{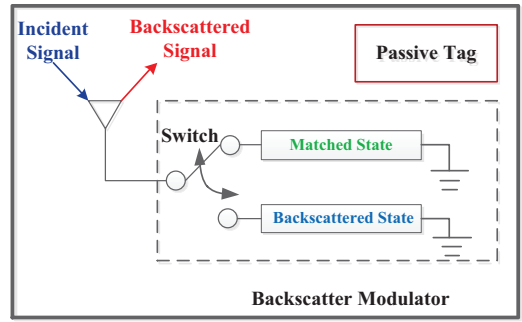}
\caption{Diagram of the analog signal chain for a passive Tag.}\label{fig:passiveTag}
\end{figure}

For clarity, we consider a block-fading channel model where the channel is static in each block. The DL channel from the PT to the PR is denoted by $\mathbf{H}_d\in
\mathbb{C}^{M\times N}$. Further, we let $\bm{h}_k\in \mathbb{C}^{M}$ and $\bm{g}^H_k \in \mathbb{C}^{1\times N}$ respectively denote the BL channel from the PT to Tag $k$, as
well as from Tag $k$ to the PR.
In this case, the signal received at the PR can be written as
\begin{equation}
\bm{y}(t)=\mathbf{H}_d^H\bm{v} s(t)+\sum_{k=1}^{K} \sqrt{\alpha_k }\bm{g}_k \bm{h}^H_k\bm{v}s(t)b_k(t)+\bm{w}(t),\nonumber
\end{equation}
where $\bm{v}\in \mathbb{C}^{M}$ is the transmit beamforming employed at the PT; $s(t)$ is the signal transmitted by the PT with symbol period $T_s$, and $\bm{w}(t)$ is the
additive white Gaussian noise (AWGN) at the PR.

\subsection{Timeline and Implementation Consideration}
In the SR regime, the symbol transmitted from each Tag $b_k(t)$ has a much longer symbol period than the symbol transmitted from the PT $s(t)$, i.e., $T_b=L T_s$ with an
integer constant $L\gg1$. The coherence time of channel is divided into timeslots and each timeslot consists of $L$ symbols, as illustrated in Fig. \ref{fig:timeStructure}.
Specifically, all the Tag symbols are assumed to be constant within the same timeslot.
\begin{figure}[h]
\centering
\includegraphics[width=9cm]{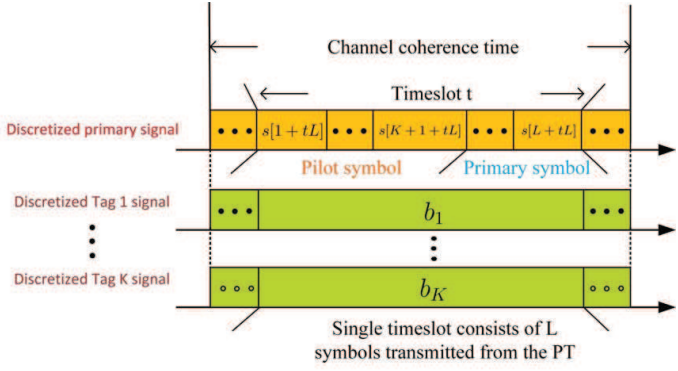}
\caption{An illustration of the timeline.}\label{fig:timeStructure}
\end{figure}

Within one particular timeslot, we discretize the received signal $\bm{y}(t)$ with the sampling rate $\frac{1}{T_s}$, and obtain the following expression with perfect
synchronization:
\begin{align}\label{received_signal}
    \bm{y}_l=\mathbf{H}^H_d \bm{v} s_l+\sum_{k=1}^{K}\sqrt{\alpha_k }b_k\bm{g}^H_k\bm{h}^H_k\bm{v}s_l+\bm{w}_l,
\end{align}
where $\bm{y}_l\triangleq \bm{y}(lT_s)$ is the $l$-th received signal at the PR, ${s}_l\triangleq {s}(lT_s)\sim\mathcal{CN}(0,1)$ is the $l$-th primary signal transmitted from
the PT, $b_k$ is the symbol transmitted from Tag $k$ within current timeslot, and $\bm{w}_l\triangleq \bm{w}(lT_s)$ is the AWGN with i.i.d. entries distributed as
$\mathcal{CN}(0,\sigma_w^2)$.

In this paper, we assume that the multi-Tags SR system operates in the time-division duplexing (TDD) mode \cite{TDD_mode}, and the channel state information (CSI) can be
obtained by uplink training. In particular, the uplink channel estimation in each particular block is divided into two phases. In both phases of training, the PR sends uplink
pilot signals and the PT estimates the channel. In the first phase, each Tag switches its impedance into the matched state, so that PT can estimate the uplink DL CSI based on
the received pilot signal. Then, the corresponding downlink DL CSI $\mathbf{H}_{\mathrm{d}}$ is obtained by exploiting the channel reciprocity.
In subsequent phase, each Tag switches its impedance into the backscattered state in turns, so that PT can estimate the corresponding BL effective CSI
$\tilde{\bm{h}}_k\triangleq b_k \hat{\bm{h}}_k \in\mathbb{C}^{M\times N}$ with $\hat{\bm{h}}_k\triangleq \bm{h}_k\bm{g}_k$ by subtracting the estimated DL CSI from the
estimated composite channel $\mathbf{H}_{d}+\tilde{\bm{h}}_k$  \cite{SymRadio}. As a result, both PT and PR can obtain CSI of the DL and BL transmission.

For convenience, we let $\hat{\bm{h}}=[\mathrm{vec}^T(\hat{\bm{h}}_1),\cdots,\mathrm{vec}^T(\hat{\bm{h}}_K)]^T\in \mathbb{C}^{KMN}$,
$\tilde{\bm{h}}=[\mathrm{vec}^T(\tilde{\bm{h}}_1),\cdots,\mathrm{vec}^T(\tilde{\bm{h}}_K)]^T\in \mathbb{C}^{KMN}$ , and $\bm{h}\triangleq
[\mathrm{vec}^T\left(\mathbf{H}_{d}\right),\hat{\bm{h}}^T]^T$.
Hence, the proposed stochastic transceiver is designed based on the effective BL CSI $\tilde{\bm{h}}$ and DL CSI $\mathbf{H}_{\mathrm{d}}$, rather than directly using the
respective knowledge of $\bm{h}_k$ and $\bm{g}_k$. Note that $\mathbf{H}_{d}$ only needs to be estimated once per block (channel coherence time), while $\tilde{\bm{h}}$ needs
to be estimated once per timeslot.
As observed in earlier works  \cite{MA_BSC,Cooperative_BSC,SymRadio,ResourceA_BSC}, the training overhead for channel estimation and time synchronization can be ignored due to
the limited number of Tags and large $L$.

\subsection{SINR Expression}
In this paper, we also assume that the signal strength of the backscatter link is much weaker than that of the direct link, due to the following facts. First, the hardware
capability of the Tags is limited, and thus the backscattering operation would inevitably result in the power loss of the Tags' symbols. Second, all the Tags' symbols are
transmitted through two channels and suffer from double attenuation. This is a typical operating regime that has been assumed in many works on SR systems
\cite{SymRadio,FD_BSC,Noma_BSC}. In this regime, the PR can apply the DLI cancellation technique to detect and decode the desired message based on the estimated CSI
$\mathbf{H}_d$ and $\tilde{\bm{h}}_k$, where the primary information $\hat{s}_l$ is first retrieved from the received signal and then the backscattered signal is obtained after eliminating the retrieved signal
$\hat{s}_l$.

\setcounter{equation}{1}

Hereinafter, we let $\bm{u}_s\in \mathbb{C}^N$ and $\bm{u}_k\in \mathbb{C}^N$ respectively denote the linear receive beamforming employed at the PR for detecting the primary
signal and Tag $k$ signal, and $\bm{u}\triangleq[\bm{u}^T_s,\bm{u}^T_1,\cdots,\bm{u}^T_K]\in\mathbb{C}^{N(K+1)}$. Applying the receive beamforming $\bm{u}_s$ at the PR, the
estimated signal is given by
\begin{align}
\hat{s}_l=\bm{u}^H_s\bm{y}_l= \bm{u}^H_s\mathbf{H}^H_d \bm{v} s_l+\sum_{k=1}^{K}\sqrt{\alpha_k }\bm{u}^H_s\tilde{\bm{h}}^H_k\bm{v}s_l+\bm{u}^H_s\bm{w}_l.\nonumber
\end{align}

For given optimization variable $\bm{\theta}\triangleq[\bm{v}^T,\bm{u}^T]^T$, Tags' symbol vector $\bm{b}\triangleq[b_1,\cdots,b_K]^T$, and channel realization $\bm{h}$,
the SINR of the DL transmission given by
\begin{equation}
\gamma_s(\bm{\theta};\bm{b})=\frac{|\bm{u}^H_s\bm{h}^H_{\mathrm{eq}}(\bm{b})\bm{v}|^2}{\sigma_w^2 \bm{u}^H_s \bm{u}_s},
\end{equation}
where $\bm{h}_{\mathrm{eq}}(\bm{b})=\mathbf{H}_d+\sum_{k=1}^K\sqrt{ \alpha}_k \tilde{\bm{h}}_k $ is the effective channel for decoding $s_l$.


In the subsequent phase, the PR eliminates the DLI based on the retrieved signal $\hat{s}_l$ and the estimated CSI $\mathbf{H}_d$. Consequently, it yields the following expression:
\begin{equation}
\bm{y}_{b,l}=\sum_{k=1}^{K}\sqrt{\alpha_k }\hat{\bm{h}}^H_k\bm{v}s_l b_k+\bm{w}_l.
\end{equation}

Then, the PR apply the linear receive beamforming vector $\bm{u}_k$ to obtain the estimated signal for Tag $k$ as follows
\begin{equation}
\hat{y}_{k,l}=\bm{u}^H_k \bm{y}_{b,l}=\sum_{m=1}^{K}\sqrt{\alpha_k } \bm{u}^H_k\hat{\bm{h}}^H_m\bm{v}s_l b_m+\bm{u}^H_k \bm{w}_l.
\end{equation}

Since each Tag only transmits a single symbol within one particular timeslot, the primary signal $s_l$ can be treated as the spread spectrum code of length $L$ when decoding
$b_k,\forall k$. By exploiting the temporary diversity via the despreading operation using the decoded primary signal $s_l$'s, the signal after despreading can be expressed as
\begin{align}
\hat{y}_k=\sum_{m=1}^K \bm{h}_{km} b_m +\bm{w}_{e}=\bm{h}_{kk}b_{k}+\sum_{m\neq k}\bm{h}_{km}b_{m}+\bm{w}_{e},
\end{align}
where $\bm{h}_{km}={\frac{1}{L}\sqrt{\alpha_m} \bm{u}^H_k\hat{\bm{h}}^H_m\bm{v}}\|\bm{s}\|$
with $\bm{s}\triangleq[s_1,\cdots,s_L]^T$ is the effective channel after despreading,
and $\bm{w}_e=\sum_{l=1}^L \frac{s^{\ast}_l\bm{u}^H_k \bm{w}_l}{L\|\bm{s}\|} $ is the effective noise.
Note that the beamforming vector $\bm{u}_{k}$ is designed to suppress the inter-tag-interference $\sum_{m\neq k}\bm{h}_{km}b_{m}$. As such, one can easily recover the original OOK symbol $b_{k}$ for tag k from $\hat{y}_{k}$.

Given the optimization variable $\bm{\theta}$, primary signal vector $\bm{s}$, and channel realization $\bm{h}$,
the SINR of the BL transmission from Tag $k$ is given by
\begin{equation}
{\gamma}_{k}(\bm{\theta};\bm{s})=\frac{\alpha_k \Lambda_{k} \|\bm{s}\|^2 |\bm{u}^H_k \hat{\bm{h}}^H_k \bm{v}|^2}{\sum_{m\neq k}^{K}\alpha_m \Lambda_{m} \|\bm{s}\|^2 |\bm{u}^H_k
\hat{\bm{h}}^H_m \bm{v}|^2 +\sigma^2_w \bm{u}^H_k \bm{u}_k},
\end{equation}
where $\Lambda_{k}\triangleq \rho_k(1-\rho_k)$.

Using the above notations, the expected SINR of the DL and the BL transmission are given by $\hat{\gamma}_s(\bm{\theta})=\mathbb{E}_{\bm{b}}[\gamma_s(\bm{\theta};\bm{b})]$ and
$\hat{\bm{\gamma}}(\bm{\theta})=[\hat{\gamma}_{1}(\bm{\theta}),\cdots,\hat{\gamma}_{K}(\bm{\theta})]^T$ with $\hat{\gamma}_{k}(\bm{\theta})=
\mathbb{E}_{\bm{s}}[{\gamma}_{k}\left(\bm{\theta};\bm{s}\right)]$, respectively. For convenience, we let
$\hat{\bm{\gamma}}_A\triangleq \big[\hat{\gamma}_{s}(\bm{\theta});\hat{\bm{\gamma}}(\bm{\theta})\big]\nonumber
$
denote the composite average SINR vector.

\begin{remark}
In practice, it may be difficult to perform perfect DLI cancellation due to potential sources of errors such as channel estimation errors and decoding errors. In this case, the received signal after DLI cancellation can be expressed as
\begin{equation}
\bm{y}_{b,l}=\sum_{k=1}^{K}\sqrt{\alpha_{k}}\hat{\bm{h}}_{k}^{H}\bm{v}s_{l}b_{k}+\sqrt{\varrho}\mathbf{H}_{d}^{H}\bm{v}s_{l}+\bm{w}_{l},\nonumber
\end{equation}
where $0<\varrho<1$ denotes the power coefficient of the residual interference after DLI cancellation. Following the earlier work \cite{IoT_Challenge}, the residual error after DLI cancellation can be approximated as Gaussian noise, thereby increasing the noise variance of the whole system. The proposed BSPD algorithm can be readily extended to such a scenario with imperfect DLI cancellation. However, the concrete model about the residual error after DLI cancellation is beyond the scope of this paper.
\end{remark}


\subsection{Problem formulation}
The joint optimization of the transmit and receive beamforming can be formulated as the general network utility maximization problem (GUMP):
\begin{align}
\mathcal{P}_G: &\mathop{\max}_{\bm{\theta}\in \bm{\Omega}}~ f\left(\hat{\bm{\gamma}}_A\right), \end{align}
where $\bm{\Omega}\triangleq\{\bm{v}|\mathrm{Tr}\left(\bm{v}\bm{v}^H\right)\leq P_{\mathrm{max}}\}$ is the feasible set of optimization variables $\bm{\theta}$, and
$P_{\mathrm{max}}$ is the power budget of the PT. The network utility function $f(\cdot)$ is a continuously differentiable and concave function of argument. Moreover,
$f(\cdot)$ is non-decreasing with respect to each component of argument and its corresponding partial derivative is Lipschitz continuous. Example of commonly used utility
functions are the sum performance utility, proportional fairness utility, and harmonic mean rate utility\cite{utility,XihanTVT,XihanIoT}.
For clarity, we focus on a novel SINR utility function as an example, which is designed based on the barrier method \cite{NP_OPT} such that by maximizing it, the PR's expected SINR can be maximized and meanwhile, each Tag's each Tag's expected SINR can be guaranteed to exceed certain threshold (with high accuracy).
As such, the optimization problem $\mathcal{P}_G$ becomes:
\begin{align}\label{appGump}
\mathcal{P}: &\mathop{\max}_{\bm{\theta}\in \bm{\Omega}}~ f\left(\hat{\bm{\gamma}}_A\right)=\hat{\gamma}_s(\bm{\theta})\!+\!\frac{1}{\psi}\sum_{k=1}^K
\log(\hat{\gamma}_k(\bm{\theta})-\gamma_{0,k}),
 \end{align}
where $\gamma_{0,k}$ is a positive average SINR target for the $k$-th BL transmission link, $\frac{1}{\psi}\sum_{k=1}^{K}\log(\hat{\gamma}_{k}(\bm{\theta})-\gamma_{0,k})$ is the barrier function used to guarantee the SINR requirement (i.e., guarantee that the expected SINR of all BL transmission links is no less than the corresponding targets $\gamma_{0,k}$'s) and $\psi$ is a parameter to control the price of the barrier function. Here, we remark that the above utility is different from the traditional weighted sum performance utility \cite{utility} in that the barrier function in \eqref{appGump} can guarantee the SINR requirement for all Tags with high accuracy, since $\log(\hat{\gamma}_{k}(\bm{\theta})-\gamma_{0,k})$ will tend to minus infinity as the SINR $\hat{\gamma}_{k}(\bm{\theta})$ deviates from the target $\gamma_{0,k}$. Specifically, the parameter $\psi$ can be used to control the tradeoff between the accuracy of the SINR requirement guarantee and the smoothness of the barrier function.

Note that there are several challenges in finding stationary solutions of problem $\mathcal{P}$, elaborated as follows. First, the objective function contains expectation
operators, which is difficult to have a closed-form expression. Second, problem  $\mathcal{P}$ is typically a function of multiple-ratio FP, where the optimization variable
$\bm{\theta}$ appears in both numerator and the denominator of $\hat{\bm{\gamma}}_A$. In particular, solving the above multiple-ratio FP is always NP-hard.
To the best of our knowledge, there lacks an efficient algorithm to handle such stochastic nonconvex optimization problem $\mathcal{P}$.

\section{Stochastic Parallel Decomposition Algorithm for General Utility Optimization}\label{EGUO}
In this section, we propose a BSPD algorithm to find the stationary solutions of problem $\mathcal{P}$. We shall first define the stationary solutions of problem
$\mathcal{P}$. Then we elaborate the implementation details of the proposed algorithm, and establish its local convergence.
\subsection{Stationary Point of Problem $\mathcal{P}$}
Before elaborating the implementation details of the proposed algorithm, the stationary solutions of problem $\mathcal{P}$ is defined as:
\begin{definition}\label{def1}
A solution $\bm{\theta}^{\star}$ is called a stationary solution of problem $\mathcal{P}$ if it satisfies the following condition:
\begin{equation}\label{stationaryCon}
 (\bm{\theta}-\bm{\theta}^{\star})^T \mathbf{J}_{\bm{\theta}}(\bm{\theta}^{\star})\nabla_{\hat{{\bm{\gamma}}}_A} f\left(\hat{{\bm{\gamma}}}^{\star}_A\right)\leq 0,\forall \bm{\theta}
 \in \mathbf{\Omega},
\end{equation}
where $\mathbf{J}_{\bm{\theta}}(\bm{\theta}^{\star})$ is the Jacobian matrix \footnote{The Jacobian matrix of $\hat{\bm{\gamma}}_A(\bm{\theta})$ is defined as
$\mathbf{J}_{\gamma}(\bm{\theta})=[\nabla_{\bm{\theta}} \hat{\gamma}_s,\nabla_{\bm{\theta}} \hat{\gamma}_1,\cdots,\nabla_{\bm{\theta}} \hat{\gamma}_K]$, where $\nabla_{\bm{\theta}} \hat{\gamma}_k$ is
the partial derivative of $\hat{\gamma}_k$ with respect to $\bm{\theta}$.}
of the SINR vector $\hat{{\bm{\gamma}}}_A$ with respect to $\bm{\theta}$ at $\bm{\theta}=\bm{\theta}^{\star}$, and $\nabla_{\hat{\bm{\gamma}}_A} f\left(\hat{{\bm{\gamma}}}^{\star}_A\right)$ is
the derivative of $f\left(\hat{{\bm{\gamma}}}_A\right)$ at $\hat{{\bm{\gamma}}}_A=\hat{{\bm{\gamma}}}^{\star}_A\triangleq \hat{{\bm{\gamma}}}_A(\bm{\theta}^{\star})$.
\end{definition}

In the following, we propose a BSPD algorithm to find stationary points of problem $\mathcal{P}$.


%


\subsection{Proposed BSPD Algorithm for Solving Problem $\eqref{appGump}$}

The proposed BSPD algorithm is summarized in Algorithm \ref{alg:BSPD}.
In BSPD algorithm, an auxiliary weight vector $\bm{\nu}\triangleq[\nu_0,\cdots,\nu_K]^T$ is introduced to approximate the derivative $\nabla_{\hat{\bm{\gamma}}_A}
f\left(\hat{\bm{\gamma}}_A\right)$. Note that traditional SAA method needs to collect a large number of samples for the random state $\{\bm{s},\bm{b}\}$ before solving the stochastic
optimization problem \eqref{appGump} and processes all samples per iterations. Hence,  it requires huge memory to store the samples and the corresponding computational
complexity is also higher. To address the above issues, we use mini-batch method \cite{miniBatch} to achieve good convergence speed with the reduced computational complexity.
In each iteration, $J$ mini-batches are generated for $\bm{s}$ and $\bm{b}$ independent and identically drawn from the primary data distribution and tag data distribution
specified in Section \ref{sys_mod}, respectively. Let superscript $t$ denote variables associated with the $t$-th iteration. In the $t$-th iteration, one random mini-batch
$\{\bm{s}^t_j, \bm{b}^t_j,j=1,\cdots,J\}$ of size $J$ is generated. Here, we only briefly discuss the impact of $J$ on the overall convergence of Algorithm \ref{alg:BSPD}. A
larger $J$
usually leads to a faster overall convergence speed at the cost of higher computational complexity at each iteration. When $J=1$, the BSPD algorithm has the lowest complexity per iteration, but the total number of iterations will also increase.

Note that the optimal $\bm{\theta}$ and $\bm{\nu}$ cannot be obtained by directly maximizing $f\left(\hat{\bm{\gamma}}_A\right)$ because $f\left(\hat{\bm{\gamma}}_A\right)$ is not
concave and it does not have closed-form expression. To enable the original nonconvex problem amenable to optimization, we first recursively approximate the average SINR
vector $\hat{\bm{\gamma}}_A$ based on the online observations of random mini-batch, which is given by
\begin{align}
\tilde{\gamma}^t_s&=(1-\xi^t)\tilde{\gamma}^{t-1}_s+\xi^t\sum_{j=1}^J \gamma_s(\bm{\theta}^{t-1},\bm{b}^{t}_j)/J,\label{appRs}\\
\tilde{\gamma}^t_k&=(1-\xi^t)\tilde{\gamma}^{t-1}_k+\xi^t\sum_{j=1}^J  \gamma_k(\bm{\theta}^{t-1},\bm{s}^{t}_j)/J,\label{appRk}
\end{align}
with $\tilde{\gamma}^{0}_s=\tilde{\gamma}^{0}_k={0}$, where $\xi^t\in(0,1]$ is a step-size sequence to be properly chosen.
Hereinafter, we let $\tilde{\bm{\gamma}}^t_A=[\tilde{\gamma}^t_s,\tilde{\gamma}^t_1,...,\tilde{\gamma}^t_K]^T$ denote the recursive approximation of the average SINR vector $\hat{\bm{\gamma}}_A$ in
the $t$-th iteration.

Then the weight vector $\bm{\nu}$ is updated as:
\begin{equation}\label{update_nu}
    \bm{\nu}^{t}=(1-\omega^t)\bm{\nu}^{t-1}+\omega^t \tilde{\bm{\nu}}^t,
\end{equation}
where $\omega^t\in(0,1]$ is a step-size sequence satisfying $\sum_t \omega^t=\infty$, $\sum_t (\omega^t)^2<\infty$, and $\tilde{\bm{\nu}}^t\triangleq \nabla_{\hat{\bm{\gamma}}_A} f(\tilde{\bm{\gamma}}^t_A)$.

Based on the recursive approximation $\tilde{\bm{\gamma}}^t_A$, the recursive approximation of the partial derivative $\nabla_{\bm{\theta}}f(\hat{\bm{\gamma}}_A)$ with respect to
$\bm{\theta}$ is given by
\begin{align}
\bm{\Xi}^t&=\mathbf{F}^t \bm{\nu}^t,\label{appGrad_1}\\
\mathbf{F}^t&=(1-\xi^t)\mathbf{F}^{t-1}+\xi^t \nabla_{\bm{\theta}}\tilde{\bm{\gamma}}^t_A,\label{appGrad_2}
\end{align}
along with $\mathbf{F}^0=\bm{0}$. It will be shown in Lemma \ref{lemma_2} that the recursive approximation $\bm{\gamma}^t_A$ and $\bm{\Xi}^t$ respectively converge to true average
SINR and partial derivative, which address the issue of no closed-form characterization of the average SINR vector and guarantees the convergence of Algorithm
\ref{alg:BSPD}.

\begin{algorithm}[t]

\caption{\label{alg:BSPD}Proposed BSPD Algorithm for solving $\mathcal{P}$}

\textbf{\small{}Input: }{\small{}Step-size sequence $\{\xi^t,\omega^t\}$.}{\small\par}

\textbf{\small{}Initialization:}{\small{} $\bm{\theta}^0\in\bm{\Omega}$; $\bm{\nu}^0=[1,\cdots,1]^T$,
and $t=1$.}{\small\par}

\textbf{Repeat}

\textbf{\small{}\,\,\,\,\,\,\,\,\,\,Step 1: }{\small{}Realize $J$ mini-batches of independent random states $\\~~~~~~~~~~~~~~~$$\{\bm{s}^t_j, \bm{b}^t_j\}$.}

\textbf{\small{}\,\,\,\,\,\,\,\,\,\,Step 2: }{\small{}Calculate $\tilde{\bm{\nu}}^t= \nabla_{\hat{\bm{\gamma}}_A} f(\tilde{\bm{\gamma}}^t_A)$ and update $\bm{\nu}^{t}$ according
$\\~~~~~~~~~~~~~~~$ to \eqref{update_nu}.}

\textbf{\small{}\,\,\,\,\,\,\,\,\,\,Step 3: }{\small{}Update the surrogate function $\tilde{f}^t(\bm{\theta})$ according to \eqref{surrogate}.}

\textbf{\small{}\,\,\,\,\,\,\,\,\,\,Step 4: }{\small{}Apply Algorithm 2 with input $\{\bm{s}^t_j,\bm{b}^t_j\}$, $\xi^t$ $\bm{\nu}^{t}$ and
$\\~~~~~~~~~~~~~~~~$$\bm{\theta}^{t-1}$ to obtain $\tilde{\bm{\theta}}^t$, as elaborated in Section \ref{sec_Short}.}

\textbf{\small{}\,\,\,\,\,\,\,\,\,\,Step 5: }{\small{}Update $\bm{\theta}^{t}$ according to \eqref{update_theta}.}

\textbf{until }the value of objective function $\tilde{f}^t(\bm{\theta})$
converges. Otherwise, let $t\leftarrow t+1$.
\end{algorithm}

Based on the generated mini-batch $\{\bm{s}^t_j, \bm{b}^t_j\}$, the current iterate $\bm{\nu}^{t}$ and the last iterate $\bm{\theta}^{t-1}\triangleq [(\bm{v}^{t-1})^T,(\bm{u}^{t-1})^T]^T$,
we tailor a surrogate function of the objective with the specific structure as follows:
\begin{align}\label{surrogate}
\tilde{f}^t(\bm{\theta})=&\xi^t\tilde{g}^t(\bm{\theta})+(1-\xi^t)(\bm{\Xi}^{t})^T(\bm{\theta}-\bm{\theta}^{t-1})-q(\bm{\theta},\bm{\theta}^{t-1}),
\end{align}
where
the proximal regularization function $q(\bm{\theta},\bm{\theta}^{t-1})$ is  given by
\begin{align}
q(\bm{\theta},\bm{\theta}^{t-1})=&\tau_{\bm{v}}\|\bm{v}-\bm{v}^{t-1}\|^2+\tau_{\bm{u}}\|\bm{u}-\bm{u}^{t-1}\|^2,
\end{align}
with $\tau_{\bm{v}},\tau_{\bm{u}}>0$ are positive constants; and $\tilde{g}^t(\bm{\theta})$ is the sample average approximation of original objective function
$f(\hat{\bm{\gamma}}_A)$,
given by
\begin{align}
\tilde{g}^t(\bm{\theta})=&\frac{1}{J}\sum_{j=1}^J \big(g^t_j(\bm{v},\bm{u}^{t-1})+g^t_j(\bm{v}^{t-1},\bm{u})\nonumber\\
&-g^t_j(\bm{v}^{t-1},\bm{u}^{t-1})\big),
\label{BotE3_1}
\end{align}
where $g^t_j(\bm{v},\bm{u})$ is the approximate objective function for the specific realization of random system states $\{\bm{s}^t_j,\bm{b}^t_j\}$ given by
\begin{align}
g^t_j(\bm{v},\bm{u})=&  \nu_0 \overline{\gamma}^t_s\big(\bm{v},\bm{u},\bm{b}^t_j)
+\sum_{k=1}^K \nu_k \overline{\gamma}^t_k(\bm{v},\bm{u},\bm{s}^t_j),
\end{align}
where the first term $\overline{\gamma}^t_s\big(\bm{v},\bm{u},\bm{b}^t_j)$ can be expressed as
\begin{align}
\overline{\gamma}^t_s\big(\bm{v},\bm{u},\bm{b}^t_j)=&2\mathrm{Re}\{(\phi^t_s(\bm{b}^t_j))^H \bm{u}^H_s\bm{h}_{\mathrm{eq}}^H\bm{v}\}\nonumber\\
&-\sigma^2_w
(\phi^t_s(\bm{b}^t_j))^H  \bm{u}^H_s \bm{u}_s\phi^t_s(\bm{b}^t_j)
\end{align}
where $\phi^t_s(\bm{b}^t_j)$ is a constant  depending on the last iterate $(\bm{v}^{t-1},\bm{u}^{t-1})$ and the current realization of
$\bm{b}^t_j$ given by
\begin{align}\label{botE2_1}
\phi^t_s(\bm{b}^t_j)=(\sigma^2_w\bm{u}^H_s \bm{u}_s)^{-1}\bm{u}^H_s\bm{h}_{\mathrm{eq}}^H\bm{v},
\end{align}
and the second term $\overline{\gamma}^t_k\big(\bm{v},\bm{u},\bm{s}^t_j)$ can be expressed as
\begin{align}
\overline{\gamma}^t_k\big(\bm{v},\bm{u},\bm{s}^t_j)=&2\mathrm{Re}\{\sqrt{\alpha_k\|\bm{s}^t_j\|^2\Lambda_k}(\phi^t_k(\bm{s}^t_j))^H\bm{u}^H_k\hat{\bm{h}}^H_k \bm{v}\}\nonumber\\
&-(\phi^t_k(\bm{s}^t_j))^H{\Gamma}_k(\bm{v},\bm{u},\bm{s}) \phi^t_k(\bm{s}^t_j),
\end{align}
where  ${\Gamma}_k(\bm{v},\bm{u},\bm{s}^t_j)$ is the function of $(\bm{v},\bm{u})$ given by
\begin{equation}
{\Gamma}_k(\bm{v},\bm{u},\bm{s}^t_j)=\sum_{m\neq k}^{K}\alpha_m \Lambda_{m} \|\bm{s}^t_j\|^2 |\bm{u}^H_k
\hat{\bm{h}}^H_m \bm{v}|^2 +\sigma^2_w \bm{u}^H_k \bm{u}_k,
\end{equation}
and  $\phi^t_k(\bm{s}^t_j)$ is a constant depending on the last iterate $(\bm{v}^{t-1},\bm{u}^{t-1})$ and the current realization of
$\bm{s}^t_j$ given by
\begin{equation}\label{botE2_3}
\phi^t_k(\bm{s}^t_j)={\Gamma}^{-1}_k(\bm{v},\bm{u},\bm{s}^t_j)\sqrt{\alpha_k\|\bm{s}^t_j\|^2\Lambda_k}(\phi^t_k(\bm{s}^t_j))^H\bm{u}^H_k\hat{\bm{h}}^H_k \bm{v}).
\end{equation}


Using the above notations, we aim to prove the following lemma, which plays a key role in the establishing the local convergence of the proposed BSPD algorithm to stationary solutions.
\begin{lemma}\label{consisLemma}
Both the  gradient and the objective value
of the approximate function $\tilde{g}^t(\bm{\theta})$ are unbiased estimation of the weighted sum average SINR, i.e.,
\begin{align}
\overline{\gamma}^t_s\big(\bm{v}^{t},\bm{u}^{t}\bm{b}^t_j)&=\gamma_s(\bm{\theta}^{t},\bm{b}^{t}_j),\label{consisT_1}\\
\overline{\gamma}^t_k\big(\bm{v}^{t},\bm{u}^{t},\bm{s}^t_j)&=\gamma_k(\bm{\theta}^{t},\bm{s}^{t}_j),\label{consisT_2}\\
\mathbb{E}_{\bm{b,s}}\left[g^t_j(\bm{v}^t,\bm{u}^t)-(\bm{\nu}^t)^T\hat{\bm{\gamma}}_A(\bm{\theta}^t)\right]&=0,\label{consisT_3}\\
\mathbb{E}_{\bm{b,s}}\left[\nabla_{\bm{\theta}}g^t_j(\bm{v}^t,\bm{u}^t)-\nabla_{\bm{\theta}}(\bm{\nu}^t)^T\hat{\bm{\gamma}}_A(\bm{\theta}^t)\right]&=0.\label{consisT_4}
\end{align}
%
\end{lemma}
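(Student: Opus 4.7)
The plan is to split Lemma~\ref{consisLemma} into two layers: \eqref{consisT_1}--\eqref{consisT_2} are pointwise tightness identities for the quadratic-transform surrogates, while \eqref{consisT_3}--\eqref{consisT_4} are Monte-Carlo unbiasedness statements that follow by taking expectations of the first two together with an envelope-theorem argument.

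For \eqref{consisT_1}, I would view $\overline{\gamma}^t_s(\bm{v},\bm{u},\bm{b}^t_j)$ as the quadratic-transform lower bound of the ratio $\gamma_s(\bm{\theta};\bm{b})=|\bm{u}_s^H\bm{h}_{\mathrm{eq}}^H\bm{v}|^2/(\sigma_w^2\bm{u}_s^H\bm{u}_s)$, built from the elementary fractional-programming identity
\[
\frac{|a|^2}{b}=\max_{\phi}\bigl\{2\real{\phi^H a}-b\,\phi^H\phi\bigr\},
\]
whose unique maximizer is $\phi^{\star}=a/b$. Comparing with \eqref{botE2_1}, the constant $\phi^t_s(\bm{b}^t_j)$ is precisely this maximizer evaluated at the current iterate $(\bm{v}^t,\bm{u}^t)$; substituting $\phi^{\star}$ back into the quadratic form yields $\overline{\gamma}^t_s(\bm{v}^t,\bm{u}^t,\bm{b}^t_j)=\gamma_s(\bm{\theta}^t,\bm{b}^t_j)$. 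The argument for \eqref{consisT_2} is identical, with the role of the denominator played by $\Gamma_k(\bm{v},\bm{u},\bm{s}^t_j)$ and the optimal auxiliary weight given by \eqref{botE2_3}.

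With \eqref{consisT_1}--\eqref{consisT_2} in hand, the definition of $g^t_j$ immediately gives
\[
g^t_j(\bm{v}^t,\bm{u}^t)=\nu_0^t\gamma_s(\bm{\theta}^t,\bm{b}^t_j)+\sum_{k=1}^{K}\nu_k^t\gamma_k(\bm{\theta}^t,\bm{s}^t_j).
\]
Since $\bm{b}^t_j$ and $\bm{s}^t_j$ are drawn, by construction of the mini-batch, from the same laws that define $\hat{\gamma}_s$ and $\hat{\gamma}_k$, the identities $\cE_{\bm{b}}[\gamma_s(\bm{\theta}^t,\bm{b}^t_j)]=\hat{\gamma}_s(\bm{\theta}^t)$ and $\cE_{\bm{s}}[\gamma_k(\bm{\theta}^t,\bm{s}^t_j)]=\hat{\gamma}_k(\bm{\theta}^t)$ combine with linearity of expectation to yield \eqref{consisT_3}.

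The delicate step, and the main obstacle, is \eqref{consisT_4}. Here I would invoke the envelope theorem: because $\phi^t_s$ and $\phi^t_k$ are unconstrained maximizers of the inner quadratic in $\phi$, the first-order optimality conditions in $\phi$ vanish at the linearization point, so the total gradient of the surrogate with respect to $\bm{\theta}$ coincides with its partial gradient holding $\phi$ fixed. At $(\bm{v}^t,\bm{u}^t)$ this partial gradient is $\nabla_{\bm{\theta}}\gamma_s(\bm{\theta}^t,\bm{b}^t_j)$ (respectively $\nabla_{\bm{\theta}}\gamma_k(\bm{\theta}^t,\bm{s}^t_j)$), which is the derivative-level analogue of \eqref{consisT_1}--\eqref{consisT_2}. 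Interchanging $\nabla_{\bm{\theta}}$ with $\cE_{\bm{b},\bm{s}}$---justified by dominated convergence using the compactness of $\bm{\Omega}$ and the finite Gaussian/Bernoulli moments of $(\bm{s},\bm{b})$---then delivers \eqref{consisT_4}. The envelope step must be handled with care because the surrogate is not the direct ratio but its quadratic relaxation; once optimality of the auxiliary weights is invoked, however, the cross terms cancel and the identity falls out.
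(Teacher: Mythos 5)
Your proof is correct and follows essentially the same route as the paper: both rest on the tightness of the quadratic transform at its optimal auxiliary variable to obtain \eqref{consisT_1}--\eqref{consisT_2}, and then pass to \eqref{consisT_3}--\eqref{consisT_4} by averaging over the i.i.d.\ mini-batch samples of $(\bm{s},\bm{b})$. The only difference is one of detail: the paper asserts that \eqref{consisT_4} ``immediately holds,'' whereas you supply the implicit justification via the envelope (Danskin) argument that the surrogate's partial gradient with $\phi$ held at its maximizer coincides with the gradient of the original ratio at the linearization point.
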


\begin{proof}
By exploiting the quadratic transform method \cite{FP_part2,FP_part1}, $\overline{\gamma}^t_s\big(\bm{v},\bm{u},\bm{b}^t_j)$ and
$\overline{\gamma}^t_k\big(\bm{v},\bm{u},\bm{s}^t_j)$ respectively follow from ${\gamma}_s(\bm{\theta}^{t},\bm{b}^{t}_j)$ and ${\gamma}_k(\bm{\theta}^{t},\bm{s}^{t}_j)$. Consequently, we have
\eqref{consisT_1} and \eqref{consisT_2}. Combining \eqref{consisT_1} and \eqref{consisT_2} with the fact that the random state $\{\bm{s},\bm{b}\}$ is bounded and identically distributed, both \eqref{consisT_3} and \eqref{consisT_4} immediately hold.  This completes the
proof.
\end{proof}
%

For fixed $\bm{\nu}$, we first obtain the optimal solution $\tilde{\bm{\theta}}^t$ of the following problem:
\begin{align}\label{surroP}
\mathop{\max}_{\bm{\theta}\in \bm{\Omega}}~\tilde{f}^t(\bm{\theta}).
\end{align}

\setcounter{equation}{32}

\begin{figure*}[b]
\hrulefill
\begin{align}
    \bm{v}^{t,\star}(\mu)=&\big(\mathbf{\Upsilon}^t(\mu)\big)^{-1} \Big(\frac{\xi^t}{J}\sum_{j=1}^J\Big[\sum^K_{k=1}
    \nu_k \sqrt{\alpha_k\Lambda_k\|\bm{s}^t_j\|^2}\phi_k(\bm{\theta}^{t-1},\bm{s}^t_j)\hat{\bm{h}}_k\bm{u}_k+
   \nu_0 \phi_s(\bm{\theta}^{t-1},\bm{b}^t_j) \bm{h}_{\mathrm{eq}}\bm{u}_s
    \Big]+(1-\xi^t)\nabla_{\bm{v}} \bm{J}^{t}_v+\tau_{\bm{v}}\bm{v}^{t-1} \Big).\label{opt_v}\\
    \mathbf{\Upsilon}^t(\mu)=&\frac{\xi^t}{J}\sum_{j=1}^J
    \sum^K_{k=1} \nu_k \sum^K_{m\neq k}\|\bm{s}^t_j\|^2 \alpha_m \Lambda_m{\phi}_k(\bm{\theta}^{t-1},\bm{s}^t_j) {\phi}^H_k(\bm{\theta}^{t-1},\bm{s}^t_j)  \hat{\bm{h}}_m
\bm{u}_k \bm{u}^H_k \hat{\bm{h}}^H_m
   +(\mu+\tau_{\bm{v}}) \mathbf{I}_M.\label{topE6_2}
\end{align}

\end{figure*}

\setcounter{equation}{26}

\setcounter{equation}{37}
\begin{figure*}[b]
\hrulefill
\begin{align}
    \bm{u}_s^{t,\star}=&\Big(\Big[\frac{\xi^t}{J}\sum_{j=1}^J\sigma^2_w \phi_s(\bm{\theta}^{t-1},\bm{b}^t_j)\phi^H_s(\bm{\theta}^{t-1},\bm{b}^t_j)
    +\tau_{\bm{u}}\Big]\mathbf{I}_N\Big)^{-1}
    \Big(\frac{\xi^t}{J}\sum_{j=1}^J
    \phi^H_s(\bm{\theta}^{t-1},\bm{b}^t_j)\bm{h}^H_{\mathrm{eq}}
    \bm{v}
    %
    +(1-\xi^t)\bm{J}^{t}_{u_s}+\tau_{\bm{u}}\bm{u}^{t-1}_s\Big)
    .\label{optimalUs}\\
    \bm{u}_k^{t,\star}=&\Big(\frac{\xi^t}{J}\sum_{j=1}^J \phi_k(\bm{\theta}^{t-1},\bm{s}^t_j)\phi^H_k(\bm{\theta}^{t-1},\bm{s}^t_j)\big(\sum_{m\neq k}^K\|\bm{s}^t_j\|^2 \alpha_k
    \Lambda_m \hat{\bm{h}}^H_{m}\bm{v}\bm{v}^H \hat{\bm{h}}_m +\sigma^2_w\mathbf{I}_N\big)+\tau_{\bm{u}}\mathbf{I}_N\Big)^{-1}\nonumber\\
    &\times \Big(\frac{\xi^t}{J}\sum_{j=1}^J \sqrt{\alpha_k\|\bm{s}^t_j\|^2\Lambda_k}\phi^H_k(\bm{\theta}^{t-1},\bm{s}^t_j)
    \hat{\bm{h}}^H_k \bm{v}
    +(1-\xi^t)\bm{J}^{t}_{u_k}+\tau_{\bm{u}}\bm{u}^{t-1}_k\Big).\label{optimalUk}
\end{align}
\end{figure*}
\setcounter{equation}{28}

For the detailed procedure for solving problem \eqref{surroP} in a parallel fashion,
it will be postponed to Section \ref{sec_Short}.
Moreover, $\bm{\theta}$ is updated according to
\begin{equation}\label{update_theta}
\bm{\theta}^t=(1-\omega^t)\bm{\theta}^{t-1}+\omega^t\tilde{\bm{\theta}}^t.
\end{equation}

Finally, the above steps are repeated until convergence.

\begin{remark}\label{advSurrogate}
 The potential advantages of the structured surrogate function for problem $\mathcal{P}$ are elaborated below. First, the numerator and denominator in \eqref{appGump} are now
 decoupled in problem \eqref{surroP}, which overcomes the difficulty from the nonlinear fractional coupling. Second, the structured surrogate function in problem
 \eqref{surroP} preserves the structure of the original problem, which helps to speed up the  convergence speed. Third, we also observe that the constraint is separable with
 respect to the two blocks of variables, i.e.,  $\bm{v}$, and $\bm{u}$. Therefore, we can decompose the original problem \eqref{surroP} into two independent subproblems with
 respect to each block of variables, where each subproblem is strongly convex and can be efficiently solved in a \emph{parallel and distributed} fashion, as detailed in Section \ref{sec_Short}.
\end{remark}

\begin{remark}
The reason that the optimization variable $\bm{\theta}^{t}$ are obtained by solving problem \eqref{surroP} is as follows. According to \eqref{stationaryCon}, it implies that a
stationary solution $\bm{\theta}^{\star}$ must be a stationary point of problem \eqref{surroP} with a \emph{stationary weight vector} $\bm{\nu}^{\star}=\nabla_{\hat{\bm{\gamma}}_A}
f\left(\hat{\bm{\gamma}}^{\star}_A\right)$, while $\bm{\nu}^{\star}$ is not known a prior. Consequently, the key idea of the proposed algorithm is to iteratively update the
optimization variable $\bm{\nu}^t$ until it converges to the corresponding stationary solution $\bm{\nu}^{\star}$. Then the limiting optimization variable
$\bm{\theta}^{\star}$ satisfying \eqref{stationaryCon} can be obtained by finding a stationary point of the corresponding  problem $\eqref{surroP}$ as $t\rightarrow \infty$.
\end{remark}

\begin{algorithm}[t]

\caption{\label{alg:shortAlg}Proposed Parallel Algorithm for solving \eqref{surroP}}

\textbf{\small{}Input: }{\small{}Last iterate $\bm{\theta}^{t-1}$ , current iterate $\bm{\nu}^{t}$, current step size $\xi^t$, and
current mini-batches $\{\bm{s}^t_j,\bm{b}^t_j\}$.
}{\small\par}

\textbf{\small{}Step 1: }{\small{}Obtain $\bm{v}^{t,\star}$ by solving problem \eqref{WSRMP_v}}.

\textbf{\small{}Step 2: }{\small{}Obtain $\bm{u}^{t,\star}$ by solving problem \eqref{subproblemU}}.

\textbf{\small{}Step 3: }{\small{}Terminate the algorithm and output $\{\bm{v}^{t,\star},\bm{u}^{t,\star}\}$}.
\end{algorithm}

\subsection{Proposed Parallel Optimization Algorithm for Solving Problem \eqref{surroP}}\label{sec_Short}
The proposed parallel optimization algorithm for solving problem \eqref{surroP} is summarized in Algorithm \ref{alg:shortAlg}. Here, we emphasize that Step 2 and 3 in
Algorithm \ref{alg:shortAlg} are executed in a parallel fashion.
At iteration $t$, Algorithm \ref{alg:shortAlg} is initialized with the input $\{\bm{s}^t_j,\bm{b}^t_j\}$, $\xi^t$, $\bm{\nu}^{t}$ and $\bm{\theta}^{t-1}$. Based on the
acquired input $\{\bm{s}^t_j,\bm{b}^t_j\}$, $\bm{\nu}^{t}$ and $\bm{\theta}^{t-1}$, the auxiliary variable $\bm{\phi}^t$  can be  calculated according to
\eqref{botE2_1} and \eqref{botE2_3}, where $\bm{\phi}^{t}=[(\bm{\phi}^t_s)^T,(\bm{\phi}^t_1)^T,\cdots,(\bm{\phi}^t_K)^T]^T$ with
$\bm{\phi}^t_s=[\phi^t_s(\bm{b}^t_1),\cdots,\phi^t_s(\bm{b}^t_J)]^T$ and $\bm{\phi}^t_k=[\phi^t_k(\bm{s}^t_1),\cdots,\phi^t_k(\bm{s}^t_J)]^T$.

Given auxiliary variable  $\bm{\phi}^t$, problem \eqref{surroP} can be decomposed into two independent strongly convex subproblems with
respect to each variable, which can be efficiently solved in a parallel and distributed manner. In the following, we show how these two subproblems are addressed.

\subsubsection{Optimization of $\bm{v}$}
We here focus on optimizing the transmit beamforming $\bm{v}$ by considering the following optimization problem
\begin{subequations}\label{WSRMP_v}
\begin{align}
&\mathop{\max}_{\bm{v}}~ \tilde{g}^t(\bm{v}) \\
    &~\textrm{s.t.} \quad \mathrm{Tr}\left(\bm{v}\bm{v}^H\right)\leq P_{\mathrm{max}},
\end{align}
\end{subequations}
where the objective function in \eqref{WSRMP_v} is given by
\begin{align}
 \tilde{g}^t(\bm{v})=&\frac{\xi^t}{J}\sum_{j=1}^J g^t_j(\bm{v},\bm{u}^{t-1})
+(1-\xi^t) (\bm{J}^{t}_v)^T(\bm{v}-\bm{v}^{t-1})\nonumber\\
&-\tau_{\bm{v}}\|\bm{v}-\bm{v}^{t-1}\|^2,
\end{align}
and $\bm{J}^{t}_v$ is the recursive approximation of the partial derivative $\nabla_{\bm{v}} f(\hat{\bm{\gamma}}_A)$ with respect to $\bm{v}$ as specified in
\eqref{appGrad_1}-\eqref{appGrad_2}.

Clearly, problem \eqref{WSRMP_v} is an quadratically constrained quadratic optimization problem, which can be solved by dealing with its dual problem. by introducing Lagrange
multiplier $\mu$ for the corresponding constraint $\mathrm{Tr}\left(\bm{v}\bm{v}^H\right)\leq P_{\mathrm{max}}$, we define the Lagrangian associated with problem
\eqref{WSRMP_v} as
\begin{align}
    \mathcal{L}^t(\bm{v},\mu)\triangleq  \tilde{g}^t(\bm{v})-\mu\left(\mathrm{Tr}\left(\bm{v}\bm{v}^H\right)- P_{\mathrm{max}}\right).
\end{align}

Since $\mathcal{L}^t(\bm{v},\mu)$ with respect to $\bm{v}$ for fixed Lagrange multiplier is an unconstrained quadratic optimization problem, it can be efficiently solved by
checking its first-order optimal condition, which yields the optimal $\bm{v}^{t,\star}(\mu)$ with $\mathbf{\Upsilon}^t(\mu)$ defined in \eqref{opt_v}-\eqref{topE6_2} as
displayed at the bottom of the next page. Note that $\mu$ in \eqref{opt_v} is chosen to be zero $\|\bm{v}^t(0)\|^2\leq P_{\mathrm{max}}$ and chosen to satisfy
$\|\bm{v}^t(\mu)\|^2= P_{\mathrm{max}}$ otherwise.

\subsubsection{Optimization of $\bm{u}$}
The subproblem with respect to receive beamforming $\bm{u}$ is given by
\setcounter{equation}{35}
\begin{align}\label{subproblemU}
&\mathop{\max}_{\bm{u}}~ \tilde{g}^t(\bm{u})
\end{align}
where the objective function in \eqref{subproblemU} is given by
\begin{align}
\tilde{g}^t(\bm{u})=&\frac{\xi^t}{J}\sum_{j=1}^J g^t_j(\bm{v}^{t-1},\bm{u})
+(1-\xi^t) (\bm{J}^{t}_u)^T(\bm{u}-\bm{u}^{t-1})\nonumber\\
&-\tau_{\bm{u}}\|\bm{u}-\bm{u}^{t-1}\|^2,
\end{align}
and $\bm{J}^{t}_u$ is the recursive approximation of the partial derivative $\nabla_{\bm{u}} f(\hat{\bm{\gamma}}_A)$ with respect to $\bm{u}$ as specified in
\eqref{appGrad_1}-\eqref{appGrad_2}.

It is seen that problem \eqref{subproblemU} is an unconstrained quadratic optimization, which can be efficiently solved by applying the first-order optimal condition. After
some calculations and appropriate rearrangement, the optimal $\bm{u}^{t,\star}_s$ and $\bm{u}^{t,\star}_k$ are defined in \eqref{optimalUs}-\eqref{optimalUk} as displayed at
the bottom of this page.

\subsection{Convergence and Computational Complexity}
In this subsection, we establish the local convergence of the proposed BSPD algorithm to stationary solutions. To this end, the step-size sequence $\{\xi^t,\omega^t\}$ needs
to satisfy the following conditions.
\begin{assumption}(Assumption on step sizes)\label{assumption1}
\begin{enumerate}[(1)]
\item $\xi^t \rightarrow 0, \sum_t \xi^t=\infty$, $\sum_t (\xi^t)^2<\infty$, $\lim_{t\rightarrow\infty} \xi^t t^{-\frac{1}{2}}<\infty$.
\item $\omega^t \rightarrow 0, \sum_t \omega^t=\infty$, $\sum_t (\omega^t)^2<\infty$.
\item $\lim_{t \rightarrow \infty} \omega^t/\xi^t=0$.
\end{enumerate}
\end{assumption}
\setcounter{equation}{39}

The motivation for the above conditions on $\{\xi^{t},\omega^{t}\}$ is explained below. First, Assumption 1-(1) and 1-(2) state that both $\xi^{t}$ and $\omega^{t}$ follow the diminishing step-size rule, while not approaching to zero too fast. Note that similar diminishing step-size rule has been assumed in many other stochastic optimization methods such as the stochastic gradient method \cite{stepSize,stepSize2} and the stochastic successive convex approximation method in \cite{THCF,RTHP,MJBP,TTS_OAC}. Second, Assumption 1-(3) states that the diminishing speed of $\xi^{t}$ is slower than that of $\omega^{t}$. According to \eqref{appRs}-\eqref{appRk}, we can see that the average SINR vector $\hat{\bm{\gamma}}_{A}$ is roughly obtained by averaging the instantaneous SINR $\sum_{j=1}^{J}\text{\ensuremath{\tilde{\bm{\gamma}}}}_{A}(\bm{\theta}^{t-\frac{1}{\xi^{t}}}$, $\bm{b}_{j}^{t-\frac{1}{\xi^{t}}-1})/J$, $\sum_{j=1}^{J}\text{\ensuremath{\tilde{\bm{\gamma}}}}_{A}(\bm{\theta}^{t-\frac{1}{\xi^{t}}+1}$, $\bm{b}_{j}^{t-\frac{1}{\xi^{t}}})/J,...,\sum_{j=1}^{J}\text{\ensuremath{\tilde{\bm{\gamma}}}}_{A}(\bm{\theta}^{t-2},\bm{b}_{j}^{t-1})/J$, $\sum_{j=1}^{J}\text{\ensuremath{\tilde{\bm{\gamma}}}}_{A}(\bm{\theta}^{t-1},\bm{b}_{j}^{t})/J $
over a time window of size $\frac{1}{\xi^{t}}$. Since $\bm{\theta}^{t}$ is changing over time $t$, the approximate average SINR vector $\tilde{\bm{\gamma}}_{A}$ may not converge to $\hat{\bm{\gamma}}_{A}$ in general. However, if $\lim_{t\rightarrow\infty}\omega^{t}/\xi^{t}=0$, it follows from \eqref{update_theta} that $\bm{\theta}^{t}$ is almost unchanged within the time window $\frac{1}{\xi^{t}}$ for sufficiently large t. In other words, $\tilde{\bm{\gamma}}_{A}$ will converge to $\hat{\bm{\gamma}}_{A}$ as $ t\rightarrow\infty$, which is crucial for guaranteeing the convergence of BSPD to a stationary point of problem (8). A typical choice $\xi^t$ and $\omega^t$ that satisfies
Assumption \ref{assumption1} is $\xi^=O(t^{-\kappa_1})$ and $\omega^=O(t^{-\kappa_2})$,  where $0.5<\kappa_1<\kappa_2\leq1$.


Based on Assumption \ref{assumption1}, we first prove a key lemma to establish the convergence of the recursive approximations $\tilde{\bm{\gamma}}^t_A$, $\bm{\Xi}^t$ and
$\bm{\nu}^t$ to true SINR vectors and partial derivations, which eventually leads to the final convergence result.
\begin{lemma}\label{lemma_2}
Under Assumption 1, we have
\begin{align}
\lim_{t\rightarrow\infty}|\tilde{\bm{\gamma}}^t_A-\hat{\bm{\gamma}}_A(\bm{\theta}^{t})|&=0,\label{lemma2_1}\\
\lim_{t\rightarrow\infty}|\tilde{f}^t(\bm{\theta}^t)-f(\hat{\bm{\gamma}}_A(\bm{\theta}^{t}))|&=0,\label{lemma2_2}\\
\lim_{t\rightarrow\infty} ||\bm{\Xi}^t-\nabla_{\bm{\theta}} f\big(\hat{\bm{\gamma}}_A(\bm{\theta}^{t})\big)||&=0,\label{lemma2_3}
\end{align}
\begin{align}
\lim_{t_1,t_2\rightarrow\infty} \tilde{f}^{t_1}&(\bm{\theta}^{t_1})-\tilde{f}^{t_2}(\bm{\theta}^{t_2})\leq \nonumber\\
&C\sqrt{\|\bm{\theta}^{t_1}-\bm{\theta}^{t_2}\|^2+\|\bm{\nu}^{t_1}-\bm{\nu}^{t_2}\|^2}.\label{lemma2_4}
\end{align}
where $C>0$ is a positive constant. Moreover, we consider a sequence $\{\bm{\theta}^{t_j}\}_{j=1}^{\infty}$ converging to a limiting point $\bm{\theta}^{\star}$, and define a
function
\begin{align}
\hat{f}(\bm{\theta})\triangleq
&\tilde{g}(\bm{\theta}^{\star})+\nabla_{\bm{\theta}}f(\hat{\bm{\gamma}}_A(\bm{\theta}^{\star}))(\bm{\theta}-\bm{\theta}^{\star})-q(\bm{\theta},\bm{\theta}^{\star}),
\end{align}
which satisfy $\hat{f}(\bm{\theta}^{\star})=f(\hat{\bm{\gamma}}_A(\bm{\theta}^{\star}))$ and
$\nabla_{\bm{\theta}}\hat{f}(\bm{\theta}^{\star})=\nabla_{\bm{\theta}}f(\hat{\bm{\gamma}}_A(\bm{\theta}^{\star})).$ Then, almost surely, we have
\begin{equation}
\lim_{t\rightarrow\infty} \tilde{f}^{t_j}(\bm{\theta})=\hat{f}(\bm{\theta}),\forall \bm{\theta}\in \bm{\Omega}.\label{lemma2_5}
\end{equation}
\end{lemma}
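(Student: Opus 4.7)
The plan is to apply a two-timescale stochastic approximation argument, in which $\tilde{\bm{\gamma}}^t_A$, $\mathbf{F}^t$, and $\bm{\nu}^t$ are fast-timescale tracking recursions driven by step size $\xi^t$, while $\bm{\theta}^t$ evolves slowly with step size $\omega^t$. Throughout I will use that $\bm{\Omega}$ is compact and that $\{\bm{s},\bm{b}\}$ live in a bounded set, so the per-sample SINRs and their gradients are almost-surely bounded and Lipschitz on the iterate trajectory.

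First I would prove the tracking identity \eqref{lemma2_1}. Defining $\bm{e}^t \triangleq \tilde{\bm{\gamma}}^t_A - \hat{\bm{\gamma}}_A(\bm{\theta}^t)$ and expanding via \eqref{appRs}--\eqref{appRk} together with \eqref{update_theta}, the recursion takes the form
\[
\bm{e}^t = (1-\xi^t)\bm{e}^{t-1} + \xi^t\,\bm{\zeta}^t + \bm{\delta}^t,
\]
where $\bm{\zeta}^t$ is a bounded martingale-difference noise whose zero-mean property follows from Lemma \ref{consisLemma}, and $\bm{\delta}^t = \hat{\bm{\gamma}}_A(\bm{\theta}^{t-1}) - \hat{\bm{\gamma}}_A(\bm{\theta}^t)$ is $O(\omega^t)$ by Lipschitz smoothness of $\hat{\bm{\gamma}}_A$ and the step $\|\bm{\theta}^t - \bm{\theta}^{t-1}\| = O(\omega^t)$ from \eqref{update_theta}. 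Applying a standard Robbins--Siegmund-type tracking lemma (e.g., Lemma 1 of \cite{TTS_OAC}), the conditions $\sum_t \xi^t = \infty$, $\sum_t (\xi^t)^2 < \infty$ and $\omega^t/\xi^t \to 0$ in Assumption \ref{assumption1} force $\|\bm{\delta}^t\|/\xi^t \to 0$, so $\bm{e}^t \to \bm{0}$ almost surely. The same argument applied to \eqref{appGrad_2} gives $\mathbf{F}^t \to \nabla_{\bm{\theta}}\hat{\bm{\gamma}}_A(\bm{\theta}^t)$, and \eqref{update_nu} combined with \eqref{lemma2_1} and the Lipschitz continuity of $\nabla_{\hat{\bm{\gamma}}_A} f$ yields $\bm{\nu}^t \to \nabla_{\hat{\bm{\gamma}}_A} f(\hat{\bm{\gamma}}_A(\bm{\theta}^t))$. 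Statement \eqref{lemma2_3} is then immediate by substituting these into $\bm{\Xi}^t = \mathbf{F}^t\bm{\nu}^t$.

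For \eqref{lemma2_2} I would use Lemma \ref{consisLemma} to show that on the diagonal $\bm{\theta}=\bm{\theta}^{t-1}$ the sample-average term $\tilde{g}^t$ in \eqref{surrogate} coincides, up to a vanishing martingale noise, with $(\bm{\nu}^t)^T\hat{\bm{\gamma}}_A(\bm{\theta}^{t-1})$; since the linear term $(1-\xi^t)(\bm{\Xi}^t)^T(\bm{\theta}-\bm{\theta}^{t-1})$ and the proximal penalty $q(\cdot,\cdot)$ both vanish at $\bm{\theta}=\bm{\theta}^{t-1}$, the claim reduces to \eqref{lemma2_1} together with the continuity of $f$. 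For \eqref{lemma2_4}, on the compact set $\bm{\Omega}$ the surrogate \eqref{surrogate} inherits uniformly bounded gradients in $(\bm{\theta},\bm{\nu})$ from the boundedness of $\nabla\tilde{g}^t$, $\bm{\Xi}^t$, and $\nabla q$ (each with constants independent of $t$), so a joint mean-value argument gives \eqref{lemma2_4} with $C$ equal to the worst of these Lipschitz constants. Finally, for \eqref{lemma2_5}, substituting \eqref{lemma2_1} and \eqref{lemma2_3} into \eqref{surrogate} along the subsequence $\bm{\theta}^{t_j}\to\bm{\theta}^\star$, observing $\xi^{t_j}\to 0$, and invoking continuity of each ingredient yields $\tilde{f}^{t_j}(\bm{\theta}) \to f(\hat{\bm{\gamma}}_A(\bm{\theta}^\star)) + \nabla_{\bm{\theta}}f(\hat{\bm{\gamma}}_A(\bm{\theta}^\star))^T(\bm{\theta}-\bm{\theta}^\star) - q(\bm{\theta},\bm{\theta}^\star) = \hat{f}(\bm{\theta})$ pointwise.

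The main obstacle is the tracking identity \eqref{lemma2_1}, on which every other claim depends. Making the two-timescale argument rigorous requires verifying the almost-sure boundedness of the martingale noise $\bm{\zeta}^t$ and the uniform control $\|\bm{\delta}^t\| = O(\omega^t)$; both hinge on compactness of $\bm{\Omega}$ and the Lipschitz differentiability of the SINR expressions in Section \ref{sys_mod}, but some care is needed because the SINR denominators must be bounded away from zero along the trajectory for those Lipschitz constants to be uniform in $t$.
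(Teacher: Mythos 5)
Your overall strategy --- treating $\tilde{\bm{\gamma}}^t_A$, $\mathbf{F}^t$ and $\bm{\nu}^t$ as fast tracking recursions whose drift is $O(\omega^t)$ and is killed by Assumption~\ref{assumption1}-(3) --- is in the same family as the paper's argument but is organized differently and, for \eqref{lemma2_1}, is actually more careful. The paper proves \eqref{lemma2_1} by invoking the law of large numbers and the central limit theorem to assert $\mathbb{E}\|\tilde{\bm{\gamma}}^t_A-\hat{\bm{\gamma}}^t_A\|=O(1/\sqrt{Jt})$, which implicitly treats the samples as if they were all drawn at a fixed $\bm{\theta}$; your error recursion $\bm{e}^t=(1-\xi^t)\bm{e}^{t-1}+\xi^t\bm{\zeta}^t+\bm{\delta}^t$ with $\|\bm{\delta}^t\|=O(\omega^t)$ makes explicit the bias caused by the drifting iterate and shows exactly where $\omega^t/\xi^t\to 0$ enters. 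For \eqref{lemma2_3} the paper does essentially what you do, but packages the tracking step as a citation to Lemma~1 of \cite{Feasible_SP}, verifying its conditions (a)--(e) via the bias bound and $\lim_{t\to\infty}\xi^t t^{-1/2}<\infty$; your self-contained version, applied uniformly to all three recursions and then to the product $\bm{\Xi}^t=\mathbf{F}^t\bm{\nu}^t$, is equivalent. Your treatments of \eqref{lemma2_4} (uniform Lipschitz constants on the compact feasible set) and \eqref{lemma2_5} (pointwise passage to the limit along the subsequence) match the paper's, and your closing caveat about the SINR denominators needing to stay bounded away from zero is a legitimate point that the paper does not address.

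The one place your reduction does not close is \eqref{lemma2_2}. You argue that at the diagonal the linear and proximal terms of \eqref{surrogate} vanish and that $\tilde{g}^t$ coincides, up to vanishing noise, with $(\bm{\nu}^t)^T\hat{\bm{\gamma}}_A(\bm{\theta}^{t-1})$, so the claim ``reduces to \eqref{lemma2_1} and continuity of $f$.'' Two things are elided here. First, $\tilde{g}^t$ enters \eqref{surrogate} with the prefactor $\xi^t\to 0$, so as written $\tilde{f}^t(\bm{\theta}^t)=\xi^t\tilde{g}^t(\bm{\theta}^t)+O(\omega^t)$ tends to zero, not to $f(\hat{\bm{\gamma}}_A(\bm{\theta}^t))$. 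Second, even without that prefactor, $(\bm{\nu}^t)^T\hat{\bm{\gamma}}_A$ is the linearized weighted-sum objective, which differs from $f(\hat{\bm{\gamma}}_A)$ by the zeroth-order part of the utility (e.g.\ the logarithmic barrier evaluated at the current SINR), so \eqref{lemma2_1} plus continuity of $f$ does not by itself deliver \eqref{lemma2_2}. What is needed is a recursively averaged constant term in the surrogate that tracks $f(\tilde{\bm{\gamma}}^t_A)$ itself. In fairness, the paper's own one-line derivation of \eqref{lemma2_2} from \eqref{largNum} silently assumes such a term and the definition \eqref{surrogate} appears to omit it, so the gap is partly inherited from the source; but your proof should either supply that missing constant term explicitly or restate \eqref{lemma2_2} in terms of $f(\tilde{\bm{\gamma}}^t_A)$ rather than $\tilde{f}^t(\bm{\theta}^t)$.
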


Please refer to Appendix \ref{appendixA} for the detailed proof. With Assumption 1 and Lemma \ref{lemma_2}, the following convergence theorem can be proved.

\begin{theorem}\label{convergTheorem}
Suppose Assumption \ref{assumption1} is satisfied. Let $\{\bm{\nu}^{t_j},\bm{\theta}^{t_j}\}$ denote any subsequence of iterates generated by Algorithm \ref{alg:BSPD} that
converges to a limiting point $(\bm{\nu}^{\star},\bm{\theta}^{\star})$. Then we almost surely have
\begin{align}
\bm{\nu}^{\star}=&\nabla_{\hat{{\bm{r}}}_A} f\left(\hat{{\bm{\gamma}}}^{\star}_A\right)\\
(\bm{\theta}-\bm{\theta}^{\star})^T\nabla_{\bm{\theta}} f\left(\hat{{\bm{\gamma}}}^{\star}_A\right)\leq& 0, \forall \bm{\theta}\in \mathbf{\Omega},\label{Theorem_2}
\end{align}
where $\hat{{\bm{\gamma}}}^{\star}_A=\hat{{\bm{\gamma}}}_A(\bm{\theta}^{\star})$.
\end{theorem}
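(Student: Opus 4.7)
The plan is to combine the asymptotic consistency results of Lemma~\ref{lemma_2} with the per-iteration optimality condition of the surrogate subproblem \eqref{surroP}, and then pass to the limit along the chosen subsequence.

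First, I would establish that along the subsequence the update $\bm{\theta}^{t}=(1-\omega^{t})\bm{\theta}^{t-1}+\omega^{t}\tilde{\bm{\theta}}^{t}$ forces $\|\bm{\theta}^{t}-\tilde{\bm{\theta}}^{t}\|\to 0$. Since $\tilde{\bm{\theta}}^{t}$ maximizes a strongly concave surrogate over the compact set $\bm{\Omega}$ (the regularizer $\tau_{\bm v}\|\bm v-\bm v^{t-1}\|^{2}+\tau_{\bm u}\|\bm u-\bm u^{t-1}\|^{2}$ ensures boundedness of $\tilde{\bm{\theta}}^{t}$), the sequence $\{\tilde{\bm{\theta}}^{t}\}$ is bounded. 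Combined with $\omega^{t}\to 0$ from Assumption \ref{assumption1}, a standard telescoping argument on $\|\bm{\theta}^{t}-\bm{\theta}^{t-1}\|$ yields $\tilde{\bm{\theta}}^{t_j}\to\bm{\theta}^{\star}$ along the chosen subsequence.

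Next, I would prove claim (1) on $\bm{\nu}^{\star}$. By Lemma~\ref{lemma_2}, $\tilde{\bm{\gamma}}^{t}_{A}-\hat{\bm{\gamma}}_{A}(\bm{\theta}^{t})\to\bm{0}$, hence along the subsequence $\tilde{\bm{\gamma}}^{t_j}_{A}\to\hat{\bm{\gamma}}^{\star}_{A}$. Continuity of $\nabla_{\hat{\bm{\gamma}}_{A}}f$ then gives $\tilde{\bm{\nu}}^{t_j}\to\nabla_{\hat{\bm{\gamma}}_{A}}f(\hat{\bm{\gamma}}^{\star}_{A})$. The Robbins--Monro style recursion $\bm{\nu}^{t}=(1-\omega^{t})\bm{\nu}^{t-1}+\omega^{t}\tilde{\bm{\nu}}^{t}$ with $\sum_{t}\omega^{t}=\infty$, $\sum_{t}(\omega^{t})^{2}<\infty$ is a bounded averaging scheme, and standard stochastic approximation arguments (together with the already-established $\bm{\theta}^{t_j}\to\bm{\theta}^{\star}$ and $\omega^{t}/\xi^{t}\to 0$, which guarantees $\bm{\theta}^{t}$ changes slowly relative to the $\bm{\nu}$-averaging) force $\bm{\nu}^{\star}=\nabla_{\hat{\bm{\gamma}}_{A}}f(\hat{\bm{\gamma}}^{\star}_{A})$.

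For claim (2), I would invoke the first-order optimality of $\tilde{\bm{\theta}}^{t}$ for problem \eqref{surroP}: $(\bm{\theta}-\tilde{\bm{\theta}}^{t})^{T}\nabla_{\bm{\theta}}\tilde{f}^{t}(\tilde{\bm{\theta}}^{t})\le 0$ for all $\bm{\theta}\in\bm{\Omega}$. The construction of the surrogate \eqref{surrogate} together with \eqref{appGrad_1}--\eqref{appGrad_2} gives $\nabla_{\bm{\theta}}\tilde{f}^{t}(\bm{\theta}^{t-1})=\xi^{t}\nabla_{\bm{\theta}}\tilde{g}^{t}(\bm{\theta}^{t-1})+(1-\xi^{t})\bm{\Xi}^{t}$, and by Lemma~\ref{lemma_2} the limiting surrogate is $\hat{f}$ which satisfies $\nabla_{\bm{\theta}}\hat{f}(\bm{\theta}^{\star})=\nabla_{\bm{\theta}}f(\hat{\bm{\gamma}}_{A}(\bm{\theta}^{\star}))$. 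Passing to the limit along $t_j$ in the variational inequality, using the joint continuity of $\nabla\tilde{f}^{t}$ in $(\bm{\theta},\bm{\nu})$ guaranteed by \eqref{lemma2_3}--\eqref{lemma2_4}, yields $(\bm{\theta}-\bm{\theta}^{\star})^{T}\nabla_{\bm{\theta}}f(\hat{\bm{\gamma}}^{\star}_{A})\le 0$ for all $\bm{\theta}\in\bm{\Omega}$, which is \eqref{Theorem_2}.

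The main obstacle I anticipate is rigorously justifying the interchange of limit and gradient in the variational inequality, because $\tilde{f}^{t}$ is a stochastic, time-varying surrogate whose constituent pieces $\bm{\Xi}^{t}$, $\tilde{g}^{t}$, and the mini-batch average all must converge jointly and uniformly on a neighborhood of $\bm{\theta}^{\star}$. This is exactly where the two-timescale step-size condition $\omega^{t}/\xi^{t}\to 0$ becomes essential, and where the Lipschitz-type bound \eqref{lemma2_4} of Lemma~\ref{lemma_2} must be invoked to control the drift of $\tilde{f}^{t}$ between indices in the subsequence. Once this uniform control is in place, the two claims follow from continuity of $\nabla f$ and the optimality principle for $\tilde{\bm{\theta}}^{t}$.
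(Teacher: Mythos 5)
Your overall architecture matches the paper's: consistency of the surrogates via Lemma~\ref{lemma_2}, first-order optimality of $\tilde{\bm{\theta}}^{t}$ for problem \eqref{surroP}, and passage to the limit along the subsequence; your treatment of claims (1) and (2) mirrors the paper's final step. However, there is a genuine gap at your very first step, and everything downstream depends on it. You assert that boundedness of $\{\tilde{\bm{\theta}}^{t}\}$ together with $\omega^{t}\to 0$ and ``a standard telescoping argument on $\|\bm{\theta}^{t}-\bm{\theta}^{t-1}\|$'' yields $\tilde{\bm{\theta}}^{t_j}\to\bm{\theta}^{\star}$. This does not follow: $\omega^{t}\to 0$ only gives $\|\bm{\theta}^{t}-\bm{\theta}^{t-1}\|=\omega^{t}\|\tilde{\bm{\theta}}^{t}-\bm{\theta}^{t-1}\|\to 0$, i.e., consecutive iterates become close; it says nothing about $\|\tilde{\bm{\theta}}^{t}-\bm{\theta}^{t}\|$, which could remain bounded away from zero forever while $\bm{\theta}^{t}$ drifts slowly. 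Without $\|\tilde{\bm{\theta}}^{t_j}-\bm{\theta}^{t_j}\|\to 0$ you cannot transfer the variational inequality from $\tilde{\bm{\theta}}^{t_j}$ to $\bm{\theta}^{\star}$ in claim (2), nor justify $\bm{\nu}^{\star}=\lim_{j}\bm{\nu}^{t_j}=\lim_{j}\tilde{\bm{\nu}}^{t_j}$ in claim (1).

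The missing idea is the paper's two-stage argument in Appendix B. First, the uniform strong concavity of $\tilde{f}^{t}$ gives $\nabla^{T}\tilde{f}^{t}(\bm{\theta}^{t})\bm{d}^{t}\geq\eta\|\bm{d}^{t}\|^{2}$ with $\bm{d}^{t}=\tilde{\bm{\theta}}^{t}-\bm{\theta}^{t}$; combined with the gradient consistency from Lemma~\ref{lemma_2}, this yields a sufficient-ascent inequality of the form $f(\bm{\zeta}^{t})\geq f(\bm{\zeta}^{t-1})+\omega^{t}\eta\|\bm{d}^{t-1}\|^{2}-O(\omega^{t})$. If $\|\bm{d}^{t}\|$ stayed bounded away from zero, then since $\sum_{t}\omega^{t}=\infty$ the objective would grow without bound, contradicting its boundedness; hence $\liminf_{t\rightarrow\infty}\|\bm{d}^{t}\|=0$. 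Second, this $\liminf$ must be upgraded to $\limsup_{t\rightarrow\infty}\|\tilde{\bm{\zeta}}^{t}-\bm{\zeta}^{t}\|=0$ using the Lipschitz continuity of the surrogate-maximizer map, $\|\tilde{\bm{\theta}}^{t_1}-\tilde{\bm{\theta}}^{t_2}\|\leq\hat{L}\|\bm{\theta}^{t_1}-\bm{\theta}^{t_2}\|+e(t_1,t_2)$, which the paper derives from strong concavity together with the uniform bound \eqref{lemma2_4}. Only after that is in hand do your limit passages in claims (1) and (2) go through. Your closing paragraph correctly flags the uniform-convergence issue for $\tilde{f}^{t}$, but the ascent-plus-contradiction mechanism that actually forces $\tilde{\bm{\theta}}^{t}-\bm{\theta}^{t}\to 0$ is absent from the proposal and is not replaceable by a telescoping bound on successive iterates.
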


Please refer to Appendix \ref{appendixB} for the detailed proof. Moreover, the limiting point $\bm{\theta}^{\star}$ generated by Algorithm \ref{alg:BSPD} also satisfies the
stationary
condition in \eqref{stationaryCon}. Therefore, we conclude that Algorithm \ref{alg:BSPD} converges to stationary solutions of problem $\mathcal{P}$.

Finally, we analyze the computational complexity of the proposed BSPD algorithm. The number of floating point operations (FPOs) is used to evaluate the complexity.  In each iteration of the proposed algorithm for solving problem \eqref{surroP}, we solve the subproblems for the two blocks of variables in a parallel and distributed manner.
\begin{enumerate}[(1)]
\item Let us focus on the subproblem with respect to $\bm{v}$. Notwithstanding the computation of the invariant terms, the complexity for evaluating the expression of $\bm{v}^{t,\star}$ in \eqref{opt_v} is dominated by three parts. The first part calculates the intermediate variable $\mathbf{\varUpsilon}$ with complexity $O(JK^{2}M)$. The second part applies the matrix inverse based on Gaussian Jordan elimination with complexity $O(M^{3})$. The third part utilizes the bisection method to search the Lagrangian parameter $\mu$ with complexity $\delta_{s}\triangleq\log_{2}(\frac{\varsigma_{0,s}}{\varsigma_{s}})$, where $\varsigma_{s}$ is the initial size of the search interval and $\varsigma_{0,s}$ is the tolerance. Thus the computational complexity for updating $\bm{v}$ is $O(M^{3}+JK^{2}M+\delta_{s})$.
\item  Let us focus on the subproblem with respect to $\bm{u}$. Following a similar approach, we can obtain the computational complexity for updating $\bm{u}$, i.e., $O(N^{3}+JKM)$.
\end{enumerate}

We usually have $M>N$.  Based on the above analysis, the computational complexity of solving problem \eqref{surroP} is $O(M^{3}+JK^{2}M+\delta_{s})$. Overall, the complexity of the proposed BSPD algorithm is given by $O(I_{1}M^{3}+I_{1}JK^{2}M+I_{1}\delta_{s})$, where $I_{1}$ is the number of iterations. In addition, we remark that each subproblem is solved in a parallel and distributed manner, which helps to further reduce the total computational time.

\section{Simulation Results}\label{sec_simulation}
In this section, we use Monte Carlo simulations to demonstrate the benefits of the proposed stochastic transceiver scheme in terms of the SINR utility. For all
simulations, unless otherwise specified, the following set of parameters are used. Consider a single-cell SR system with $K=4$ Tags, where the simulation topology is
illustrated in Fig. \ref{fig:tpology}. Specifically, the PT and the PR are respectively located at (100,0) and (100,200) meters, while the Tags are randomly distributed at
region [99,101]$\times$[198,199.5] meters.
\begin{figure}[!h]
\centering
\includegraphics[width=8cm]{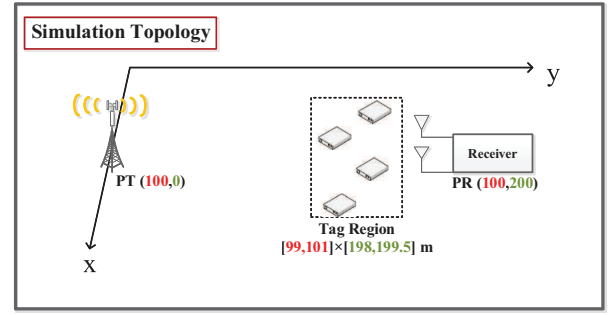}
\caption{An illustration of the simulation topology.}\label{fig:tpology}
\end{figure}

As in \cite{SymRadio,rate_region,ResourceA_BSC,FD_BSC,Cooperative_BSC,simulation_parameter}, the corresponding channel coefficients from the PT to the PR/each Tag  are
generated as normalized independent Rayleigh fading components with distance-dependent path loss modeled as $\mathrm{PL}(d_0)=\frac{\lambda^2_c G_T
G_R}{(4\pi)^2(d_0)^{\chi_0}}$ and  $\mathrm{PL}_k(d_k)=\frac{\lambda^2_c G_T G_{B,k}}{(4\pi)^2(d_k)^{\chi_k}}$, where $\lambda_c=0.33$ m is the carrier signal wavelength,
corresponding to $900$ MHz; $G_T$, $G_R$, and $G_{B,k}$ are the antenna gains for the PT, the PR, and Tag $k$, respectively; $d_0$ and $d_k$  is the distance between the PT
and the PR/Tag $k$ in meters, respectively; $\chi_0$ and $\chi_k$ are the corresponding path loss exponent. The channel from each Tag to the PR is assumed to be static, and
mainly depends on the large-scale path loss $\overline{\mathrm{PL}}_k(d_{b,k})=\frac{\lambda^2_c G_{B,k} G_R}{(4\pi)^2 (d_{b,k})^{\chi_{b,k}}}$, where $d_{b,k}$ is the
distance between Tag $k$ and the PR in meters, and $\chi_{b,k}$ is the path loss exponent for Tag $k$. We set $L=128$, $\chi_0=\chi_k=3.5$, $\chi_{b,k}=2$, and
$G_T=G_R=G_{B,k}=6$ dB \cite{Antenna_Gain}. The channel bandwidth is $1$ MHz. Furthermore, we consider $M=64$ antennas for the PT, and $N=16$ antennas for the PR. The noise
power at the PR is $\sigma^2_w=-100$ dBm. The transmit power budget for the PT is $P^{\mathrm{max}}=10$ dBm.
The size of each mini-batch is $J=10$. In addition, the reflection coefficients of all Tags are assumed to be same, i.e., $\alpha_k=0.1, \forall k$, and the activity probability of all
Tags' symbol are set $\rho_k=0.5, \forall k$.
All results are obtained by averaging over $200$ independent channel realizations. For the proposed BSPD algorithm, we set the
hyper-parameters as $\xi^t=\frac{1}{(1+t)^{\frac{2}{3}}}$ and $\omega^t=\frac{20}{20+t}$.
In our simulations, we use the barrier-function-based utility with  $\gamma_{0,k}=5$ as an example to illustrate the advantages of the proposed scheme.
Three alternative baseline schemes are included as follows:
\begin{itemize}
\item \textbf{Baseline 1}: In this scheme, the transmit beamforming is obtained as the eigenvector vectors of the DL channel from its eigenvalue decomposition (EVD).
    Furthermore, the receive beamforming is chosen to be the corresponding MMSE receiver.

\item \textbf{Baseline 2}: In this scheme, we first perform EVD operation to DL channel and all effective BL channels, i.e., $\mathbf{H}_d\mathbf{H}^H_d$ and
    $\hat{\bm{h}}_k\hat{\bm{h}}^H_k,\forall k$. Then, the transmit beamforming is obtained by the linear superposition of all resulting eigenvector vectors. The receive
    beamforming is chosen to be the corresponding MMSE receiver.

\item \textbf{Baseline 3} \cite{SymRadio}:  In this scheme, we schedule one Tag according to maximum channel gain criterion in each resource block.  Then, the transmit beamforming is obtained by the linear superposition of all resulting eigenvector vectors of the effective channel. The receive beamforming is chosen to be the corresponding MMSE receiver.

\end{itemize}

\begin{figure}[h]
\centering
\includegraphics[width=9cm]{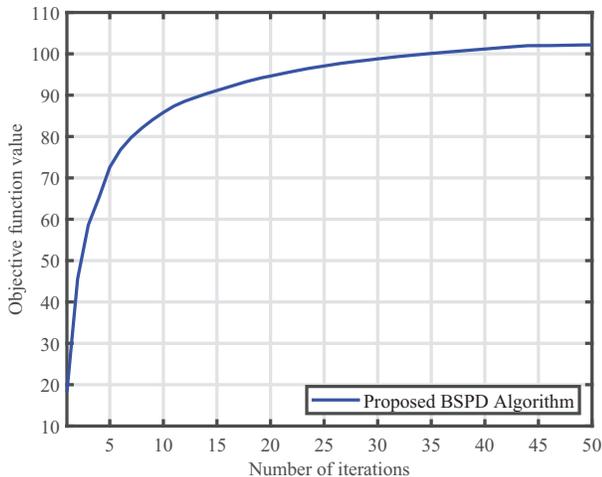}
\caption{Convergence behavior of the BPSD algorithm.}\label{fig:conv}
\end{figure}
In Fig. \ref{fig:conv}, we plot the objective function versus the iteration number, which illustrates the convergence behavior of the proposed BSPD algorithm. It is observed that the proposed BSPD algorithm quickly converges to stationary solution.

\begin{figure}[h]
\centering
\includegraphics[width=9cm]{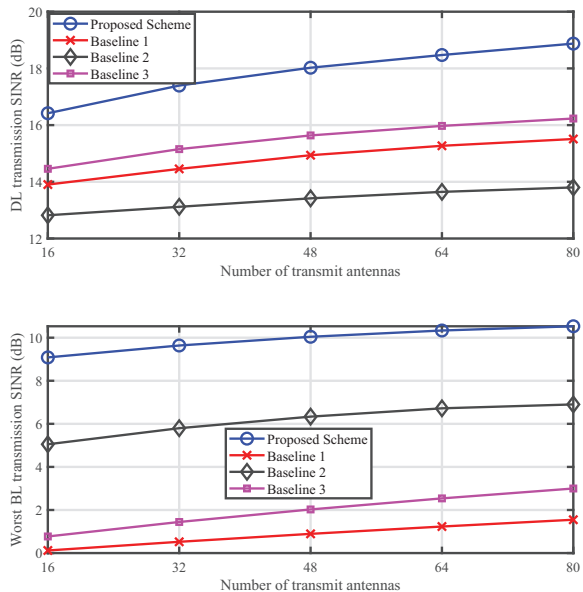}
\caption{Average SINR performance versus the number of transmit antennas $M$  when $N=16$ and $P^{\mathrm{max}}=10$ dBm.}\label{fig:varM}
\end{figure}
\begin{figure}[h]
\centering
\includegraphics[width=9cm]{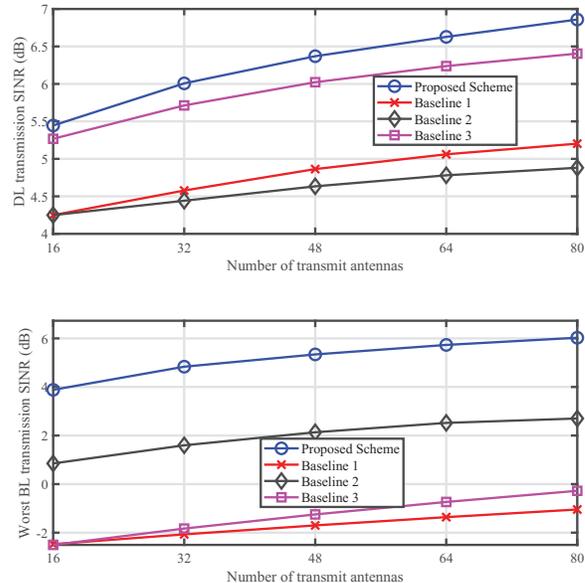}
\caption{Average SINR performance versus the number of transmit antennas $M$  when $N=16$ and $P^{\mathrm{max}}=0$ dBm.}\label{fig:varML}
\end{figure}

In Fig. \ref{fig:varM}, we plot the average SINR performance versus the number of transmit antennas $M$ when $N=16$ and $P^{\mathrm{max}}=10$ dBm.
The average SINR performance is evaluated in terms of both DL transmission SINR $\hat{\gamma}_s$ and worst BL transmission SINR $\hat{\gamma}_b=\min_{k} \hat{\gamma}_k$.
 It shows that the performance of all schemes is monotonically increasing with the number of transmit antennas; the growth rate tapers off as the number of transmit antennas increases. Moreover, it is seen that the proposed BSPD scheme achieves significant gain over all the other competing schemes for all $M$, which demonstrates the importance of the joint transceiver optimization.
The reason is that the proposed  BSPD scheme can exploit the difference in channel strengths among links to effectively assist both BL and DL transmission, while the baseline schemes do not take this difference into account. In addition, it is observed that as the number of transmit antennas increases, the gap between the proposed BSPD scheme and these other algorithms is enlarged. In Fig. \ref{fig:varML}, we plot the average SINR performance versus the number of transmit antennas $M$ when $N=16$ and $P^{\mathrm{max}}=0$ dBm. As expected, the proposed BSPD scheme achieve better average SINR performance over all the other competing schemes. This indicates that even in the low-SNR regime,  the judicious stochastic transceiver design still facilitate the symbiotic relationship between DL and BL transmissions.

\begin{figure}[h]
\centering
\includegraphics[width=9cm]{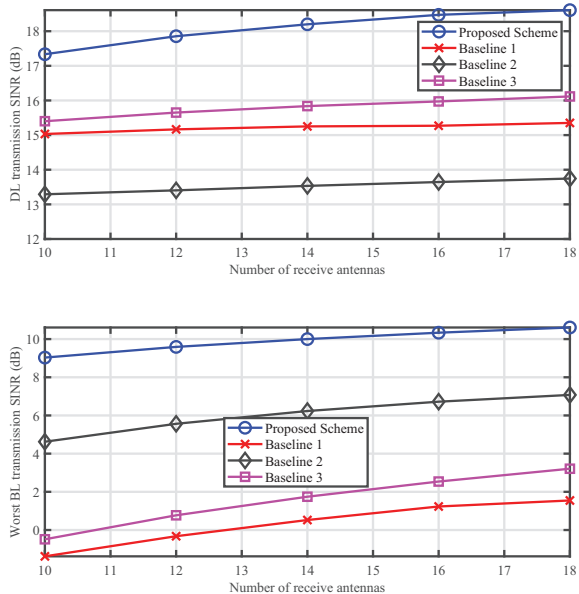}
\caption{Average SINR performance versus the number of receive antennas $N$  when $M=64$ and $P^{\mathrm{max}}=10$ dBm.}\label{fig:varN}
\end{figure}

In Fig. \ref{fig:varN}, we plot the average SINR performance versus the number of receive antennas $N$ when $M=64$ and $P^{\mathrm{max}}=10$ dBm. It is observed that the proposed BSPD scheme significantly outperforms all the other competing schemes, particularly for moderate and large number of receive antennas. Furthermore, we can see that the performance of the proposed BSPD scheme and Baseline 2 / 3 improves with the increase of $N$, due to the increased decoding capability. However, the performance of Baseline 1 is approximately invariant to the number of receive antennas. This is because Baseline 1 only focuses on improving the DL transmission, and can accurately decode the primary signal for a certain number of receive antennas. On the other hand, all Tags cannot exploit the primary system to realize opportunistic BL transmission.

\begin{figure}[h]
\centering
\includegraphics[width=9cm]{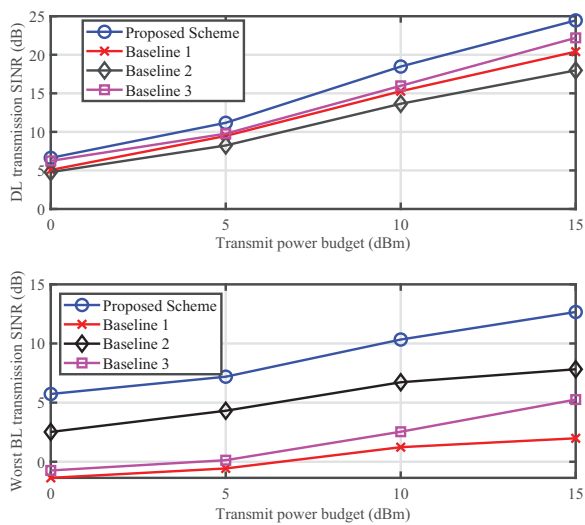}
\caption{Average SINR performance versus the transmit power budget (dBm)  when $N=16$ and $M=64$.}\label{fig:varSNR}
\end{figure}

In Fig. \ref{fig:varSNR}, we show the average SINR performance comparison versus the transmit power budget for  different schemes when $N=16$ and $M=64$. We can see that as the maximum transmit power increases, the average SINR of all schemes increases gradually. It is observed that the average SINR achieved by the proposed BSPD scheme is higher than that achieved by all the other baseline schemes, especially in the high transmit power budget regime. This indicates that  the proposed BSPD scheme can leverage the judicious transceiver design to
make full use of the entire system's radio resources, and further realizes better symbiotic relationship between DL and BL transmissions.

%


In Table \ref{bar}, we show the performance comparison of different schemes in terms of DL transmission rate and worst-case BL transmission BER when $M=64$, $N=16$, and $P^{\mathrm{max}}=10$ dBm. It is observed that the proposed BSPD scheme achieves better tradeoff performance among different links than other baselines.
Here, we remark that the worst-case BL transmission BER is based on the OOK coding scheme.
The reason is that the proposed BSPD scheme exploits the additional multipath components provided by all the Tags and further significantly enhance the throughput of the DL transmission. On the other hand, the  opportunistic BL transmission is enabled by the primary DL transmission. Therefore, there is an overall benefit achieved by the proposed BSPD scheme as compared to all the other competing schemes.

 \begin{table}[h]
  \centering
  \caption{Performance  comparison among different links for different schemes.}\label{bar}
 \begin{tabular}{|c|c|c|}
   \hline
   & DL transmission rate&Worst BL transmission BER\\
   \hline
   Proposed scheme&6.167 bps/Hz & 6.129$\times 10^{-3}$\\
   \hline
   Baseline 1 &$5.115$ bps/Hz& 1.3$\times 10^{-1}$\\
   \hline
    Baseline 2  & $4.594$ bps/Hz &0.5444$\times 10^{-1}$\\
   \hline
    Baseline 3 &$5.342$ bps/Hz &$1.175\times 10^{-1}$\\
   \hline
 \end{tabular}
 \end{table}

\section{Conclusion}\label{sec_con}
In this paper, we consider the stochastic transceiver design for the downlink transmission of the multi-Tags SR systems, to alleviate the performance bottleneck caused by the
DLI and inter-Tag interference. We formulate the optimization of transceiver design as the GNUMP under some practical constraints.
Based on the online observation of some random system states, we tailor a surrogate function with some specific structure and subsequently develop a novel algorithm named BSPD
to solve the resulting problem in a mini-batch fashion. In addition, we also show that the proposed BSPD algorithm converges to stationary solutions of the original GNUMP.
Finally, simulation results verify that the proposed algorithms can achieve significant gain over the baselines.
\appendices

\section{Proof of Lemma \ref{lemma_2}}\label{appendixA}
According to the law of large numbers and the central limit theorem, we can have
\begin{equation}
\tilde{\bm{\gamma}}^t_A \xrightarrow[]{\text{a.s.}}\hat{\bm{\gamma}}^t_A,~\mathbb{E}\|\tilde{\bm{\gamma}}^t_A-\hat{\bm{\gamma}}^t_A\|=O(\frac{1}{\sqrt{Jt}}),\label{largNum}
\end{equation}
where $\hat{\bm{\gamma}}^t_A=\hat{\bm{\gamma}}^t_A(\bm{\theta}^t)$. Consequently, \eqref{lemma2_1} and \eqref{lemma2_2} immediately follows from \eqref{largNum}.

Then, we focus on proving \eqref{lemma2_4} which characterizes the Lipschitz continuity of $\tilde{f}^t(\bm{\theta})$ with respect to $\hat{\bm{\gamma}}_A$ and $\bm{\theta}$.
Here, we recall that $\tilde{f}^t(\bm{\theta})$ is continuously differentiable functions of $(\hat{\bm{\gamma}}_A,\bm{\theta})$. Moreover, since channel samples are always bounded
in practice, the first-order and second-order derivative of $\tilde{f}^t(\bm{\theta})$ with respect to $(\hat{\bm{\gamma}}_A,\bm{\theta})$ are respectively bounded. In other words,
$\tilde{f}^t(\bm{\theta})$ is Lipschitz continuous with respect to $\hat{\bm{r}}_A$ and $\bm{\theta}$, respectively. Therefore, it follows that \eqref{lemma2_4} holds.

In addition, \eqref{lemma2_3} is a consequence of \cite{Feasible_SP}, Lemma 1, which provides a general convergence result for any sequences of random vectors
satisfies conditions (a)-(e) in this lemma. In the following, we aim to verify that the technical conditions (a)-(e) are satisfied therein. Since the instantaneous SINRs $\gamma_s$
and $\gamma_k$ are bound, we can find a convex and closed box region to contain $\gamma_s$ and $\gamma_k$ such that condition (a)-(b) are satisfied. From \eqref{largNum}, we have
\begin{align}
||\mathbb{E}[\bm{\Xi}^t]-&\nabla_{\bm{\theta}} f\big(\hat{\bm{\gamma}}_A(\bm{\theta}^{t})\big)||\leq \nonumber\\ &\mathbb{E}\|\mathbf{J}_{\gamma}(\bm{\theta}^t)(\nabla_{\hat{\bm{\gamma}}_A}
f\big(\tilde{\bm{\gamma}}_A(\bm{\theta}^{t}))-\nabla_{\hat{\bm{\gamma}}_A} f\big(\hat{\bm{\gamma}}_A(\bm{\theta}^{t}))\|\nonumber\\
&\overset{(a)}{=}O(\|\tilde{\bm{\gamma}}^t_A-\hat{\bm{\gamma}}^t_A\|)=O(\frac{1}{\sqrt{Jt}}),\label{conditionC}
\end{align}
where \eqref{conditionC}-(a) follows from the fact that the first-order derivative of $ f\big(\hat{\bm{\gamma}}_A(\bm{\theta}^{t}))$ is Lipschitz continuous and
$\mathbf{J}_{\gamma}(\bm{\theta}^t)$ are bounded w.p.1. Together \eqref{conditionC} with $\lim_{t\rightarrow\infty} \xi^t t^{-\frac{1}{2}}<\infty$, we have $
\lim_{t\rightarrow\infty} \xi^t ||\mathbb{E}[\bm{\Xi}^t]-\nabla_{\bm{\theta}} f\big(\hat{\bm{\gamma}}_A(\bm{\theta}^{t})\big)||<\infty.$ As such, the technical condition (c) in
\cite{Feasible_SP}, Lemma 1 is satisfied. Moreover, the technical condition (d) also follows from the assumption on $\{\xi^t\}$. Based on \eqref{lemma2_4} and $\lim_{t
\rightarrow \infty} \omega^t/\xi^t=0$, it is easy to verify that the technical condition (e) is satisfied.

Follows from \eqref{lemma2_1}-\eqref{lemma2_4}, \eqref{lemma2_5} immediately holds. This completes the proof.

\section{Proof of Theorem \ref{convergTheorem}}\label{appendixB}
For clarity, we let $\bm{\zeta}=[\bm{\nu}^T,\bm{\theta}^T]^T$ denote the composite control variables. When there is no ambiguity, we use $f(\bm{\phi})$ as an abbreviation for
$f(\hat{\bm{\gamma}}_A(\bm{\theta}))$.
\subsubsection{Step 1}
We first prove that $\lim \inf_{t\rightarrow\infty}\|\tilde{\bm{\zeta}}^t-\bm{\zeta}^t\|$=0 w.p.1.
From \eqref{surrogate}, $\tilde{f}^t(\bm{\theta})$ is uniformly strongly concave, and thus
\begin{equation}\label{AppendixB_Lip}
\nabla^T\tilde{f}^t(\bm{\theta}^t)\bm{d}^t\geq \eta\|\bm{d}^t\|^2+\tilde{f}^t(\tilde{\bm{\theta}}^t)-\tilde{f}^t(\bm{\theta}^t)\geq
\eta\|\bm{d}^t\|^2,
\end{equation}
where $\bm{d}^t=\tilde{\bm{\theta}}^t-\bm{\theta}^t$, and $\eta>0$ is some constant. From Assumption \ref{assumption1} and the fact that the gradient of ${f}^t(\bm{\zeta}^t)$
is Lipschitz continuous, there exists $L_f>0$ such that
\begin{align} \label{Appendix_low}
{f}(\bm{\zeta}^t)&\overset{(a)}{\geq}{f}(\bm{\zeta}^{t-1})+
\omega^t \nabla^T_{\bm{\theta}}{f}^t(\bm{\zeta}^{t-1})\bm{d}^{t-1}\nonumber\\
&-L_f(\omega^t)^2\|\bm{d}^{t-1}\|^2-O(\omega^t) \nonumber\\
&={f}^t(\bm{\zeta}^{t-1})-L_f(\omega^t)^2\|\bm{d}^{t-1}\|^2-O(\omega^t)\nonumber\\
&+\omega^t(\nabla^T_{\bm{\theta}}{f}^t(\bm{\zeta}^{t-1})-
\nabla^T \tilde{f}^t(\bm{\theta}^{t-1})+\nabla^T \tilde{f}(\bm{\theta}^{t-1}))\bm{d}^{t-1}\nonumber\\
&\overset{(b)}{\geq}{f}^t(\bm{\zeta}^{t-1})+\omega^t\eta\|\bm{d}^{t-1}\|^2-O(\omega^t),
\end{align}
where $O(\omega^t)$ implies that $\lim_{t\rightarrow\infty}O(\omega^t)/\omega^t=0$, \eqref{Appendix_low}-(a) follows from the first-order Taylor expansion of
$\tilde{f}^t(\bm{\zeta}^t)$, and \eqref{Appendix_low}-(b) follows from \eqref{AppendixB_Lip} and
$\lim_{t\rightarrow\infty}\|\nabla^T_{\bm{\theta}}{f}^t(\bm{\zeta}^{t-1})-
\nabla^T \tilde{f}^t(\bm{\theta}^{t-1})\|=0$.

In the following, we prove $\lim \inf_{t\rightarrow\infty}\|\tilde{\bm{\zeta}}^t-\bm{\zeta}^t\|=0$ w.p.1 by contradiction.
\begin{proof}
Suppose there exist a positive constant $\chi>0$ such that $\lim_{t\rightarrow\infty}\|\nabla^T_{\bm{\theta}}{f}^t(\bm{\zeta}^{t-1})-
\nabla^T \tilde{f}^t(\bm{\theta}^{t-1})\|\geq\chi$ with a certain probability. Then we can find a realization such that $\|\bm{d}^{t-1}\|\geq \chi$ for all $t$. W.l.o.g. we
focus on such a realization. By choosing a sufficiently large $t_0$, there exists $\overline{\varphi}>0$ such that
\begin{align}\label{contra_appendixB}
{f}(\bm{\zeta}^t)-{f}(\bm{\zeta}^{t-1})\geq \omega^t \overline{\varphi}\|\bm{d}^{t-1}\|^2,\forall t\geq t_0.
\end{align}
Moreover, we have
\begin{align}
{f}(\bm{\zeta}^t)-{f}(\bm{\zeta}^{t-1})\geq \overline{\varphi}\chi^2\sum_{j=t_0}^t(\omega^j)^2,
\end{align}
which, in view of $\sum_{j=t_0}^{\infty}(\omega^j)^2=\infty$, contradicts the boundedness of $\{{f}(\bm{\zeta}^t)\}$. As such, $\lim
\inf_{t\rightarrow\infty}\|\tilde{\bm{\zeta}}^t-\bm{\zeta}^t\|=0$ must be held.
\end{proof}
\subsubsection{Step 2}
Then we focus on proving that $\lim \sup_{t\rightarrow\infty}\|\tilde{\bm{\zeta}}^t-\bm{\zeta}^t\|=0$ w.p.1.
The proof relies on the following useful lemma.
\begin{lemma}\label{appendixB_lemma3}
There exists a constant $\hat{L}>0$ such that
\begin{align}
\|\tilde{\bm{\theta}}^{t_1}-\tilde{\bm{\theta}}^{t_2}\|\leq \hat{L}\|\bm{\theta}^{t_1}-\bm{\theta}^{t_2}\|+e(t_1,t_2),
\end{align}
where $\lim_{t_1,t_2\rightarrow\infty}=0$.
\end{lemma}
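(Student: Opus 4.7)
The plan is to exploit the strong concavity of the two surrogates. Because the proximal term $q(\bm{\theta},\bm{\theta}^{t-1})=\tau_{\bm{v}}\|\bm{v}-\bm{v}^{t-1}\|^{2}+\tau_{\bm{u}}\|\bm{u}-\bm{u}^{t-1}\|^{2}$ enters $\tilde{f}^{t}$ with a minus sign while the remaining pieces of $\tilde{f}^{t}$ are jointly concave in $\bm{\theta}$ thanks to the block-separable decomposition \eqref{BotE3_1}, the surrogate $\tilde{f}^{t}$ is strongly concave in $\bm{\theta}$ with a common modulus $\eta:=2\min(\tau_{\bm{v}},\tau_{\bm{u}})>0$ for every $t$. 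I would begin by writing the variational inequalities that characterize the unique maximizers $\tilde{\bm{\theta}}^{t_{1}}$ and $\tilde{\bm{\theta}}^{t_{2}}$ on the convex compact set $\bm{\Omega}$, testing each against the other, and adding them. Combining the sum with strong concavity and Cauchy--Schwarz yields the standard bound
\begin{equation*}
\eta\,\|\tilde{\bm{\theta}}^{t_{1}}-\tilde{\bm{\theta}}^{t_{2}}\|\le\bigl\|\nabla\tilde{f}^{t_{1}}(\tilde{\bm{\theta}}^{t_{2}})-\nabla\tilde{f}^{t_{2}}(\tilde{\bm{\theta}}^{t_{2}})\bigr\|,
\end{equation*}
so the entire task reduces to controlling the gap between the two surrogate gradients at a common point.

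Using the explicit decomposition $\nabla\tilde{f}^{t}(\bm{\theta})=\xi^{t}\nabla\tilde{g}^{t}(\bm{\theta})+(1-\xi^{t})\bm{\Xi}^{t}-\nabla_{\bm{\theta}}q(\bm{\theta},\bm{\theta}^{t-1})$, I would split the gap above into three groups, corresponding to the data that can change from iteration $t_{1}$ to iteration $t_{2}$. First, the anchor $\bm{\theta}^{t-1}$ enters through $\nabla q$ and through the auxiliary variables $\phi^{t}_{s},\phi^{t}_{k}$ used to linearize $\tilde{g}^{t}$ in \eqref{botE2_1}--\eqref{botE2_3}; since the channel realizations are a.s.\ bounded and $\bm{\Omega}$ is compact, these expressions have uniformly bounded derivatives, producing a Lipschitz bound of the form $\hat{L}_{1}\|\bm{\theta}^{t_{1}-1}-\bm{\theta}^{t_{2}-1}\|$. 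Combining this with $\bm{\theta}^{t}-\bm{\theta}^{t-1}=\omega^{t}(\tilde{\bm{\theta}}^{t}-\bm{\theta}^{t-1})=O(\omega^{t})\rightarrow 0$ absorbs the index shift into $\hat{L}\|\bm{\theta}^{t_{1}}-\bm{\theta}^{t_{2}}\|$ modulo an $o(1)$ residual.

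Second, I would treat the slow-timescale averages $\bm{\Xi}^{t}$, $\bm{\nu}^{t}$, and $\tilde{\bm{\gamma}}_{A}^{t}$. By Lemma \ref{lemma_2} these converge almost surely to $\nabla_{\bm{\theta}}f(\hat{\bm{\gamma}}_{A}(\bm{\theta}^{t}))$, $\nabla_{\hat{\bm{\gamma}}_{A}}f(\hat{\bm{\gamma}}_{A}(\bm{\theta}^{t}))$, and $\hat{\bm{\gamma}}_{A}(\bm{\theta}^{t})$ respectively, each a Lipschitz-continuous function of $\bm{\theta}^{t}$ on the compact feasible set. Consequently the differences $\bm{\Xi}^{t_{1}}-\bm{\Xi}^{t_{2}}$, $\bm{\nu}^{t_{1}}-\bm{\nu}^{t_{2}}$, and $\tilde{\bm{\gamma}}_{A}^{t_{1}}-\tilde{\bm{\gamma}}_{A}^{t_{2}}$ each decompose into a Lipschitz-in-$\bm{\theta}$ part (folded into $\hat{L}\|\bm{\theta}^{t_{1}}-\bm{\theta}^{t_{2}}\|$) plus an additive term vanishing as $t_{1},t_{2}\rightarrow\infty$.

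The third and hardest piece is the mini-batch randomness inside $\tilde{g}^{t}$: at iterations $t_{1}$ and $t_{2}$ the surrogates use different realizations $\{\bm{s}^{t_{i}}_{j},\bm{b}^{t_{i}}_{j}\}$, so even at a common $\bm{\theta}$ the two sample averages differ by a quantity with no a priori reason to shrink with the iterate gap. The observation that rescues the argument is that this instantaneous contribution enters $\tilde{f}^{t}$ only through the prefactor $\xi^{t}$ in \eqref{surrogate}; since $\nabla\tilde{g}^{t}$ is a.s.\ uniformly bounded on $\bm{\Omega}$ by the boundedness of the channels, the gradient discrepancy it produces at the common test point is therefore at most $O(\xi^{t_{1}})+O(\xi^{t_{2}})=o(1)$ by Assumption \ref{assumption1}. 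Collecting the three $o(1)$ residuals into the single error $e(t_{1},t_{2})$ and the three Lipschitz coefficients into the constant $\hat{L}$ delivers the desired inequality.
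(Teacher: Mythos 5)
Your proposal is correct, but it reaches the lemma by a genuinely different route than the paper. The paper works at the level of \emph{function values}: it constructs a deterministic comparison surrogate $\overline{f}^t(\bm{\theta};\bm{\zeta}^t)$, shows $|\tilde{f}^{t_1}(\bm{\theta})-\tilde{f}^{t_2}(\bm{\theta})|\leq D\|\bm{\theta}^{t_1}-\bm{\theta}^{t_2}\|+e(t_1,t_2)$ uniformly on $\bm{\Omega}$, and then invokes the perturbation principle that for a strongly concave maximization a change of the objective moves the maximizer by a controlled amount. You instead work at the level of \emph{gradients}: the two variational inequalities plus strong monotonicity give $\eta\|\tilde{\bm{\theta}}^{t_1}-\tilde{\bm{\theta}}^{t_2}\|\leq\|\nabla\tilde{f}^{t_1}(\tilde{\bm{\theta}}^{t_2})-\nabla\tilde{f}^{t_2}(\tilde{\bm{\theta}}^{t_2})\|$, and you then split the gradient gap into the anchor/proximal part, the slow recursive averages handled by Lemma \ref{lemma_2}, and the fresh mini-batch term killed by the $\xi^t$ prefactor. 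Your route buys something concrete: the sup-norm perturbation argument for an $\eta$-strongly concave objective literally yields only an $O(\sqrt{\epsilon/\eta})$ displacement of the maximizer, so deriving the \emph{linear} dependence $\hat{L}\|\bm{\theta}^{t_1}-\bm{\theta}^{t_2}\|$ from function-value closeness alone requires an implicit appeal to gradient closeness anyway; your argument makes that step explicit and airtight. Your observation that the raw sample-dependent term $\xi^t\nabla\tilde{g}^t$ contributes only $O(\xi^{t_1})+O(\xi^{t_2})=o(1)$ because of its vanishing prefactor is also a cleaner disposal of the mini-batch randomness than the paper's comparison with $\overline{f}^t$. What the paper's formulation buys in exchange is the uniform-in-$\bm{\theta}$ function-value estimate \eqref{lemma3_7}, which it reuses directly when passing to the limiting problem in Step 3 of the theorem proof; if you adopt your version you would still want to record that function-value estimate (or re-derive it) for that later step.
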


\begin{proof}
Following a similar proof of Lemma \ref{lemma_2}, it can be shown that
\begin{align}
\lim_{t\rightarrow\infty}|\tilde{f}^t(\bm{\theta})-\overline{f}^t(\bm{\theta};\bm{\zeta}^t)|&=O(\overline{e}_{\overline{t}}),\label{lemma3_1}\\
\lim_{t\rightarrow\infty}\|\tilde{\nu}^t-\nabla_{\hat{\bm{\gamma}}_A}f(\hat{\bm{\gamma}}_A^t)\|&=O(\overline{e}_{\overline{t}}),\label{lemma3_2}
\end{align}
where $\overline{f}^t(\bm{\theta};\bm{\zeta}^t)\triangleq
f(\overline{\bm{\gamma}}_A^t)+\nabla_{\bm{\theta}}f(\bm{\zeta}^t)(\bm{\theta}-\bm{\theta}^t)-\tau\|\bm{\theta}-\bm{\theta}^t\|^2
$ and $\lim_{\overline{t}}\overline{e}_{\overline{t}}\rightarrow0$. Note that $\overline{f}^t(\bm{\theta};\bm{\zeta}^t)$ and $\nabla_{\hat{\bm{\gamma}}_A}f(\hat{\bm{r}}_A^t)$ are
Lipschitz continuous in $\bm{\zeta}^t$, which can be easily verified. Furthermore, we have
\begin{align}
|\overline{f}^t(\bm{\theta};\bm{\zeta}^{t_1})-\overline{f}^t(\bm{\theta};\bm{\zeta}^{t_2})|&\leq D \|\bm{\zeta}^{t_1}-\bm{\zeta}^{t_2}\|,\label{lemma3_3}\\
\|\nabla_{\hat{\bm{\gamma}}_A}f(\hat{\bm{\gamma}}_A^{t_1})-\nabla_{\hat{\bm{\gamma}}_A}f(\hat{\bm{\gamma}}_A^{t_2})\|&\leq D \|\bm{\zeta}^{t_1}-\bm{\zeta}^{t_2}\|,\forall \bm{\theta}\in
\bm{\Omega},\label{lemma3_4}
\end{align}
for some positive constant $D>0$. Based on \eqref{lemma3_1}-\eqref{lemma3_4}, we have
\begin{align}
|\tilde{f}^{t_1}(\bm{\theta})-\tilde{f}^{t_2}(\bm{\theta})|&\leq D\|\bm{\theta}^{t_1}-\bm{\theta}^{t_2}\|+O(\overline{e}_{\overline{t}})+e(t_1,t_2),\label{lemma3_5}\\
\|\tilde{\bm{\nu}}^{t_1}-\tilde{\bm{\nu}}^{t_2}\|&\leq D \|\bm{\theta}^{t_1}-\bm{\theta}^{t_2}\|+O(\overline{e}_{\overline{t}})+e(t_1,t_2),\label{lemma3_6}
\end{align}
where $\lim_{t_1,t_2\rightarrow\infty}e(t_1,t_2)=0$. Since \eqref{lemma3_5} holds for any $\overline{t}>0$ and $\lim_{\overline{t}}\overline{e}_{\overline{t}}\rightarrow0$, we
have
\begin{align}
|\tilde{f}^{t_1}(\bm{\theta})-\tilde{f}^{t_2}(\bm{\theta})|&\leq D\|\bm{\theta}^{t_1}-\bm{\theta}^{t_2}\|\!+\!e(t_1,t_2),\forall \bm{\theta}\in \bm{\Omega}.\label{lemma3_7}
\end{align}
Combining the Lipschitz continuity and strong concavity of $\tilde{f}^t(\bm{\theta})$ with \eqref{lemma3_7}, it can be shown that
\begin{align}
\|\tilde{\bm{\theta}}^{t_1}-\tilde{\bm{\theta}}^{t_2}\|\leq D_1D_2\|\bm{\theta}^{t_1}-\bm{\theta}^{t_2}\|+D_1e(t_1,t_2),\label{lemma3_8}
\end{align}
for some positive constants $D_1,D_2>0$. This is because for strictly concave problem, when the objective function \eqref{surrogate} is changed by
amount $e(\bm{\theta})$, the optimal solution $\tilde{\bm{\theta}}^t$ will be changed by the same order (i.e.,  $\pm O(\mid e(\bm{\theta})\mid)$). Finally,
Lemma \ref{appendixB_lemma3} follows from \eqref{lemma3_4} and \eqref{lemma3_8}.
\end{proof}
According to Lemma \ref{appendixB_lemma3} and the same analysis as that in \cite{stepSize}, Proof of Theorem 1, it can be shown that
\begin{align}
\lim \sup_{t\rightarrow\infty}\|\tilde{\bm{\zeta}}^t-\bm{\zeta}^t\|=0,~\mathrm{w.p.1.} \label{lemma3_9}
\end{align}

\subsubsection{Step 3}
In the rest of the proof, we are ready to prove the convergence theorem. By definition, we have $\lim_{j\rightarrow\infty}
\tilde{\bm{\nu}}^{t_j}=\nabla_{\hat{\bm{\gamma}}_A}f(\hat{\bm{\gamma}}^{\star}_A).$
Furthermore, it follows from \eqref{lemma3_9} that $\bm{\nu}^{\star}=\lim_{j\rightarrow\infty} {\bm{\nu}}^{t_j}=\nabla_{\hat{\bm{r}}_A}f(\hat{\bm{r}}^{\star}_A).$ Combing
\eqref{surroP}, Lemma \ref{lemma_2} and \eqref{lemma3_9}, $\tilde{\bm{\theta}}^{\star}$ must be the optimal solution of the following concave optimization problem w.p.1.:
\begin{align}
\mathop{\max}_{\bm{\theta}\in \bm{\Omega}}~\tilde{f}^t(\bm{\theta}).
\end{align}
From the first-order optimality condition, we have
\begin{align}
\nabla^T \tilde{f}^t(\bm{\theta}^{\star})(\bm{\theta}-\bm{\theta}^{\star})\leq0,\forall \bm{\theta}\in \bm{\Omega}.\label{lemma3_10}
\end{align}
From Lemma \ref{lemma_2} and \eqref{lemma3_10}, $\bm{\theta}^{\star}$ also satisfies \eqref{Theorem_2}. This completes the proof.


\begin{thebibliography}{1}
\bibitem{IoT_survey}
Y. Ai, L. Wang, Z. Han, P. Zhang, and L. Hanzo, ``Social networking and caching aided collaborative computing for the Internet-of-Things," \emph{IEEE Commun. Mag.}, vol. 56,
no. 12, pp. 149-155, Dec. 2018.

\bibitem{IoT_benefit}
A. Osseiran, O. Elloumi, J. Song, and J. F. Monserrat, ``Internet-of-Things," \emph{IEEE Commun. Std. Mag.}, vol. 1, no. 2, pp. 84, Jul. 2017.

\bibitem{IoT_Challenge}
A. A. Khan, M. H. Rehmani, and A. Rachedi, ``Cognitive-radio-based Internet-of-Things: Applications, architectures, specrtum related functionalities, and future research
directions," \emph{IEEE Wireless Commun.}, vol. 24, no. 3, pp. 17-25, Jun. 2017.

\bibitem{QinTao}
Q. Tao, C. Zhong, K. Huang, X. Chen, and Z. Zhang, ``Ambient backscatter communication systems with MFSK modulation," \emph{IEEE Trans. Wireless Commun.}, vol. 18, no. 5, pp.
2553-2564, May. 2019.

\bibitem{FDD_CBSC}
S. Xiao, H. Guo, and Y. C. Liang, ``Resource allocation for full-duplex-enaled cognitive backscatter networks", \emph{IEEE Trans. Wireless Commun.}, vol. 18, no. 6, pp.
3222-3235, Apr. 2019.

\bibitem{binary_mod}
V. Liu, A. Parks, V. Talla, S. Gollakota, D. Wetherall, and J. R. Smith, ``Ambient backscatter: wireless communication out of thin air," in \emph{Proc. ACM SIGCOMM,} Hong
Kong, China, Aug. 2013, pp.39-50.

\bibitem{Tag}
X. Lu, D. Niyato, H. Jiang, D. I. Kim, Y. Xiao, and Z. Han, ``Ambient backscatter assisted wireless powered communications," \emph{IEEE Wireless Commun.}, vol. 25, no. 2, pp.
170-177, Apr. 2018.

\bibitem{AMBC_survey}
N. V. Huynh, D. T. Hoang, X. Lu, D. Niyato, P. Wang, and D. I. Kim, `` Ambient backscatter communications: A contemporary survey," \emph{IEEE Commun. Surveys Tuts.}, 2018.

\bibitem{DLI}
J. D. Griffin, and G. D. Durgin, `` Complete link budgets for backscatter-radio and RFID systems," \emph{IEEE Antennas Propagt. Mag.}, vol. 51, no. 2, pp. 11-25, Apr. 2009.

\bibitem{FS_wifi}
B. Kellogg et al., ``Passive Wi-Fi: Bring low power to Wi-Fi transmissions," in \emph{Proc. USENIX Symp. Netw. Syst. Design Implement.
(NSDI)}, Santa Clara, CA, USA, Mar. 2016, pp. 151-164.

\bibitem{FS_inter}
V. Iyery et al., ``Inter-technology backscatter: Towards Internet connectivity for implanted devices," in \emph{Proc. ACM
SIGCOMM,} Florianopolis, Brazil, Aug. 2016, pp.356-369.

\bibitem{MA_BSC}
H. Guo, Q. Zhang, S. Xiao, and Y. C. Liang, ``Exploiting multiple antennas for cognitive ambient backscatter communication," \emph{IEEE Internet Things J.}, vol. 6, no. 1,
pp.765-775, Feb. 2019.

\bibitem{Cooperative_BSC}
G. Yang, Q. Zhang, and Y. C. Liang, ``Cooperative ambient backscatter communications for green Internet-of-Things," \emph{IEEE Internet Things J.}, vol. 5, no. 2,
pp.1116-1130, Apr. 2018.

\bibitem{OFDM}
G. Yang, Q. Zhang, and Y. C. Liang, ``Modulation in the air: Backscatter communication over ambient OFDM carrier," \emph{IEEE Trans. Commun}, vol. 66, no. 3, pp.1219-1233,
Mar. 2018.

\bibitem{SymRadio}


R. Long, H. Guo, G. Yang, Y. C. Liang, and R. Zhang, ``Symbiotic radio: A new communication paradigm for passive Internet-of-Things,"  \emph{IEEE Internet Things J.}, DOI: 10.1109/JIOT.2019.2954678, Nov. 2019.

\bibitem{rate_region}
W. Liu, Y. C. Liang, Y. Li, B. Vucetic, ``Backscatter multiplicative multiple-access systems: Fundamental limits and practical design", \emph{IEEE Trans. Wireless Commun.},
vol. 17, no. 9, Sep. 2018.

\bibitem{ResourceA_BSC}
H. Guo et al., ``Resource allocation for symbiotic radio system with fading channels," \emph{IEEE Access}, vol. 7, pp.
34333-34347, Mar. 2019.

\bibitem{FD_BSC}
R. Long, H. Guo, L. Zhang, and Y. C. Liang, ``Full-duplex backscatter communications in symbiotic radio systems", \emph{IEEE Access}, vol. 7, pp. 21579-21608, Feb. 2019.


\bibitem{Noma_BSC}
Q. Zhang, L. Zhang, Y. C. Liang, and P. Kam, ``Backscatter-NOMA: A symbiotic system of cellular and Internet-to-Things networks", \emph{IEEE Access}, vol. 7, pp. 20000-20013,
Feb. 2019.

\bibitem{NP_OPT}
D. Bertsekas, \emph{Nonlinear Programming}, 2nd ed. Belmont, MA: Athena Scientific, 1999.

\bibitem{SAA_lec}
A. Shapiro, D. Dentcheva, and A. Ruszczynski, \emph{Lecture on stochastic
programming: Modeling and Theory}. Philadelphia, PA, USA: SIAM, Sep. 2009.

\bibitem{timeSyn}
M. Morelli, C. C. J. Kuo, and M. Pun, ``Synchronization techniques for orthogonal frequency division multiple access (OFDMA): A tutorial
review," \emph{Proc. IEEE}, vol. 95, no. 7, pp. 1394-1427, Jul. 2007.

\bibitem{refCons}
D. M. Dobkin, \emph{The RF in RFID: Passive UHF RFID in Practice.} Amsterdam, The Netherlands: Elsevier, 2007.


\bibitem{TDD_mode}
E. Bjornson, E. G. Larsson, and T. L. Marzetta, ``Massive MIMO: ten myths and one critical question," \emph{IEEE Commun. Mag.}, vol. 54, no. 2, pp.114-123, Feb. 2016.





\bibitem{rateAnalysis}
D. T. Hoang et al., ``Ambient backscatter: A new approach to improve network performance for RF-powered cognitive radio networks," \emph{IEEE Trans. Commun.}, vol. 65, no. 9, pp.3659-3674, Sep. 2017.

\bibitem{GaussianApp}
J. G. Proakis, \emph{Digital Communications}, 4th ed. New York: McGrawHill, Inc., 2001.

\bibitem{utility}
H. V. Cheng, E. Bj\"{o}rnson, and E. G. Larsson, ``Optimal pilot and payload power control in single-cell massive MIMO systems," \emph{IEEE Trans. Signal Process.}, vol. 65, no. 9,
pp. 2363-2378, May. 2017.

\bibitem{XihanTVT}
X. Chen et al.,  ``Energy-efficient resource allocation for latency-sensitive mobile edge computing," \emph{IEEE Trans. Veh. Technol.}, vol. 69, no. 2, pp. 2246-2262, Feb. 2020.

\bibitem{XihanIoT}
X. Chen et al.,  ``Efficient resource allocation for relay-assisted computation offloading in mobile edge computing,"
\emph{IEEE Internet Things J.}, vol. 7, no. 3, pp. 2452-2468, Mar. 2020.

\bibitem{RFID_hardware1}
K. V. S. Rao, P. V. Nikitin, and S. F. Lam, ``Antenna design for UHF RFID tags: a review and a practical application," \emph{IEEE Trans.
Antennas Propagt.}, vol. 53, no. 12, pp. 3870-3876, Dec. 2005.

\bibitem{RFID_hardware2}
L. Kon\'{e}, R. Kassi, N. Rolland,``UHF RFID tags backscattered power measurement in reverberation chamber,"  \emph{IEEE Electron. Lett.}, vol. 51, no. 13, pp. 972-973, 2015.








\bibitem{miniBatch}
 M. Li et al., ``Efficient mini-batch
training for stochastic optimization," in \emph{Proc. 20th ACM SIGKDD
Int. Conf. Knowl. Discovery Data Mining}, 2014, pp. 661-670.

\bibitem{FP_part2}
K. Shen, and W. Yu, ``Fractional programming for communication systems-Part II: Uplink scheduling via matching," \emph{IEEE Trans. Signal Process.}, vol. 66, no. 10, pp.
2631-2644, Mar. 2018.

\bibitem{FP_part1}
K. Shen, and W. Yu, ``Fractional programming for communication systems-Part I: Power control and beamforming," \emph{IEEE Trans. Signal Process.}, vol. 66, no. 10, pp.
2616-2630, Mar. 2018.




\bibitem{stepSize}
Y. Yang et al., ``A parallel decomposition method for nonconvex stochastic multi-agent optimization
problems," \emph{IEEE Trans. Signal Process.}, vol. 64, no. 11, pp. 2949-2964,
Jun. 2016.

\bibitem{stepSize2}
X. Chen, K. Shen, H. V. Cheng, A. Liu, W. Yu, and M. Zhao, ``Power control for massive MIMO systems with nonorthogonal pilots," \emph{IEEE Commun. Lett.}, vol. 24, no .3, pp. 612-616,  Mar. 2020.

\bibitem{THCF}
A. Liu, X. Chen, W. Yu, V. Lau, and M. Zhao, ``Two-timescale hybrid
compression and forward for massive MIMO aided C-RAN," \emph{IEEE Trans.
Signal Process.}, vol. 67, no. 9, pp. 2484-2498, May 2019.

\bibitem{RTHP}
X. Chen et al., ``Randomized two-timescale hybrid precoding for downlink multicell massive MIMO systems," \emph{IEEE Trans.
Signal Process.}, vol. 67, no. 16, pp. 4152-4167, Jul. 2019.

\bibitem{MJBP}
X. Chen, H. V. Cheng, A. Liu, K. Shen, and M. Zhao, ``Mixed-timescale beamforming and power splitting for massive MIMO aided SWIPT IoT network," \emph{IEEE Wireless Commun. Lett.},  vol. 9, no. 1, pp. 78-82, Jan. 2019.


\bibitem{TTS_OAC}
X. Chen et al., ``High-Mobility Multi-Modal Sensing for IoT Network via MIMO AirComp: A Mixed-Timescale Optimization Approach," \emph{IEEE Commun. Lett.},  DOI: 10.1109/LCOMM.2020.2999718, Jun. 2020.


%
%

\bibitem{simulation_parameter}
T. Rappaport, \emph{Wireless Communications: Principles and Practice}, 2nd ed. Upper Saddle River, NJ, USA: Prentice Hall PTR, 2001.

\bibitem{Antenna_Gain}
P. V. Nikitin and K. V. S. Rao, ``Antennas and propagation in UHF RFID systems", \emph{IEEE Int'l Conf. RFID}, Apr. 2008, pp. 277-288.

\bibitem{Feasible_SP}
A. Ruszczynski,  ``Feasible direction methods for stochastic programming
problems," \emph{Math. Programm.}, vol. 19, no. 1, pp. 220-229, Dec. 1980.
\end{thebibliography}
\end{document}